\newtheorem{definition}{Definition}
\newtheorem{theorem}{Theorem}
\newtheorem{lemma}{Lemma}
\newtheorem{proposition}{Proposition}
\newtheorem{example}{Example}
\newcommand{\black}[1]{\textcolor{black}{#1}}
\title{Strategy-Proof and Non-Wasteful Multi-Unit Auction \\via Social Network}
\author{Takehiro Kawasaki\textsuperscript{\rm 1}, Nathana\"{e}l Barrot\textsuperscript{\rm 2}, Seiji Takanashi\textsuperscript{\rm 3}, Taiki Todo\textsuperscript{\rm 1,2}, \and Makoto Yokoo\textsuperscript{\rm 1,2}\\
\textsuperscript{\rm 1}Kyushu University, Japan, \{kawasaki@agent., todo@, yokoo@\}inf.kyushu-u.ac.jp\\
\textsuperscript{\rm 2}RIKEN AIP, Japan, nathanaelbarrot@gmail.com\\
\textsuperscript{\rm 3}Kyoto University, Japan, s.takanashi1990@gmail.com\\ 
}
\begin{document}

\maketitle

\begin{abstract}
Auctions via social network, pioneered by Li et al.\ (2017),
have been attracting considerable attention in the literature
of mechanism design for auctions.
However, 
no known mechanism has satisfied strategy-proofness,
non-deficit, non-wastefulness, and individual rationality 
for the multi-unit unit-demand auction, except for some na\"{i}ve ones.
In this paper, we first propose a mechanism that satisfies all the
above properties.
We then make a comprehensive comparison with two na\"{i}ve mechanisms,
showing that the proposed mechanism dominates them in social surplus,
seller's revenue, and incentive of buyers for truth-telling.
We also analyze the characteristics of the social surplus and the revenue
achieved by the proposed mechanism, including the constant approximability of
the worst-case efficiency loss and the complexity of optimizing revenue
from the seller's perspective.
\end{abstract}

\section{Introduction}
\label{sec:intro}

Auction theory has attracted much attention in
artificial intelligence
as a foundation of multi-agent resource allocation.
One of the mainstreams in the literature is
analyzing auctions from the perspective of
mechanism design.
In particular, several works studied
how to design {\em strategy-proof} auctions,
which incentivize each buyer to truthfully report
her valuation function, regardless of the reports of the other buyers.
One critical contribution in the literature
is the development of the Vickrey-Clarke-Groves mechanism (VCG),
which satisfies strategy-proofness and
various other properties~\cite{vickrey:JOF:1961,clarke,groves:73}.

Li et al.\ (\citeyear{li:AAAI:2017}) proposed a new model of auctions,
in which buyers are distributed in a social network and
the information on the auction propagates over it.
Utilizing a social network, the seller can advertise the auction
to more potential buyers beyond her followers, as many works studied in
network science~\cite{emek:EC:2011,borgatti:SCI:2009,jackson:SEN:2008,kempe:KDD:2003}.
From the buyers' perspective, however,
forwarding the information increases the number of buyers,
which reduces the possibility that they will get the item.
Therefore,
the main challenge in the auction via social network is
how to incentivize buyers to forward the information
to as many followers as possible,
as well as truthfully reporting their valuation functions.
For selling a single unit of an item,
Li et al. (\citeyear{li:AAAI:2017}) developed an auction mechanism in which
each buyer is incentivized to forward the information to her followers.

Zhao et al.\ (\citeyear{zhao:AAMAS:2018}) studied
a multi-unit unit-demand auction via social network,
where each unit is identical and each buyer requires a unit.
They proposed the generalized information diffusion mechanism (GIDM)
and argued that it is strategy-proof.
However, Takanashi et al.\ (\citeyear{takanashi:arXiv:2019})
pointed out an error in their proof
and argued that GIDM is not strategy-proof.
They also proposed a strategy-proof mechanism for the same model,
which however violates a revenue condition called {\em non-deficit},
i.e., the seller might suffer a deficit.
To the best of our knowledge,
for the multi-unit unit-demand auction via social network,
no mechanism satisfying both strategy-proofness and non-deficit
has been developed, except for some na\"{i}ve ones.

The main objective of this paper is
to propose a mechanism that satisfies both strategy-proofness
and non-deficit, as well as some other properties.
As Takanashi et al.\ (\citeyear{takanashi:arXiv:2019}) pointed out,
no mechanism satisfies those properties and Pareto efficiency,
i.e., maximizing the social surplus, under certain natural assumptions.
They thus considered weakening the non-deficit condition.
In this paper, on the other hand,
we consider a weaker efficiency property
called {\em non-wastefulness},
which only requires the allocation of as many units as possible.
Non-wastefulness has its own importance in practice.
For example, in a spectrum auction,
it is important to allocate as much frequency range as possible
to carriers in order to guarantee a sufficient number of services.

We propose a new mechanism, called distance-based mechanism,
for a multi-unit unit-demand auction via social network,
which satisfies strategy-proofness, non-deficit,
non-wastefulness, and individual rationality,
i.e., no buyer receives negative utility under truth-telling,
and whose description is much simpler than GIDM.
It is inspired by the concept of
the {\em diffusion critical tree}, originally proposed in Li et al.\ (\citeyear{li:AAAI:2017}),
which specifies, for each buyer $i$, the set of {\em critical} buyers
for $i$'s participation.
If a buyer $j$ is critical for another buyer $i$'s participation,
i.e., if $i$ cannot participate in the auction without $j$'s
forwarding of information, $j$ must receive a higher priority in
the competition.

We then make a comprehensive comparison with two na\"{i}ve mechanisms
that also satisfy (most of) the above properties.
One is based on VCG, being applied only to the buyers
who are directly connected to the seller. The other mechanism
simply allocates the units in the first-come-first-served manner
with no payment.
We show that the distance-based mechanism dominates
both of these na\"{i}ve ones in terms of
social surplus and the seller's revenue.
Furthermore,
in those mechanisms, hiding the information, combined with reporting the
true value, is also a dominant strategy,
while this is not the case in our mechanism when $k \geq 2$.
This indicates that each buyer has a stronger incentive for truth-telling
in the distance-based mechanism.

We further analyze the characteristics of the social surplus and
the revenue of the distance-based mechanism.
About social surplus, it guarantees that
each winner is in the set of top-$k$ buyers except for her followers.
It also has a constant worst-case efficiency loss,
based on a measure proposed by Nath and Sandholm (\citeyear{nath:GEB:2018}),
when an optimal reserve price is introduced.
About revenue,
we show that a revenue monotonicity condition fails and
that maximizing revenue by
optimally sending the information is NP-complete.

\section{Preliminaries}
\label{sec:prel}

We first define the standard notations for multi-unit unit-demand auctions.
Let $s$ be a seller who is willing to sell
the set $K$ of $k$ identical units.
Let $N$ be the set of $n$ buyers,
where each buyer $i \in N$ has a unit-demand valuation function for $K$.
Let \black{$\boldsymbol{x} = (x_{i})_{i \in N} \subset \{0,1\}^{n}$} be an
allocation, which specifies who obtains a unit, where
$\black{x_{i}} = 1$ indicates that buyer $i$ obtains a unit under allocation
\black{$\boldsymbol{x}$}, and $\black{x_{i}} = 0$ otherwise.
Let $v_{i} \in \mathbb{R}_{\geq 0}$ indicate the true unit-demand {\em value} of buyer $i$
for a single unit.
We assume that each buyer's utility is {\em quasi-linear},
i.e., the utility of buyer $i$ under allocation $\boldsymbol{x}$,
when she pays $p_{i} \in \mathbb{R}$, is given as $v_{i} \cdot \black{x_{i}} -
p_{i}$.

Next, we define additional notations for the auction via social network.
For each buyer $i \in N$,
let $r_{i} \subseteq N \setminus \{i\}$ be the set of buyers to whom
buyer $i$ can forward the information, called $i$'s {\em followers}.
Also, let $r_{s} \subseteq N $ be the set of {\em direct buyers},
i.e., those to whom the seller $s$ can directly send the information.
Given \black{$(r_{i})_{i \in N \cup \{s\}}$},
we define the auction network as a digraph $G = (N \cup \{s\}, E)$,
where for each \black{$i \in N \cup \{s\}$} and each \black{$j \in r_{i}$},
a directed edge, from \black{$i$ to $j$}, is added to the set $E$. 
Note that $r_{i}$ is also private information of buyer $i$ in our model,
so the auction network is defined according to {\em reported} $\boldsymbol{r}' = (r'_{i})_{i \in N}$,
where $r'_{i}$ indicates the set of $i$'s followers to whom $i$
forwards the information.
To summarize, for each $i$,
the private information is given as $\theta_{i} = (v_{i}, r_{i})$,
called the true type of $i$,
consisting of the true value $v_{i}$ and the set $r_{i}$ of the followers.
Any reportable type $\theta'_{i} = (v'_{i}, r'_{i})$ of
$i$ with true type $\theta_{i} = (v_{i}, r_{i})$
satisfies $r'_{i} \subseteq r_{i}$,
i.e., a buyer can only forward the information to her followers.
Let $R(\theta_{i})$ be the set of all reportable types by $i$ with $\theta_{i}$.
Also, let $\boldsymbol{\theta}'$ denote the profile of types reported by all buyers
and $\boldsymbol{\Theta}$ denote the set of all possible type profiles.

For notation simplicity, we introduce additional technical terms
regarding the auction network.
A buyer $i$ is {\em connected}
if a path $s \rightarrow \cdots \rightarrow i$
in $G$ is formed based on the reported $\boldsymbol{r}'$.
Let $\hat{N}$ denote the set of connected buyers.
For each $i$, let $d(i)$ denote the distance of
the shortest path from $s$ to $i$. If $i$ is not connected,
we assume $d(i) = \infty$.
Given $\boldsymbol{\theta}'$,
a buyer $j \in \hat{N}$ is a {\em critical parent of} $i \in \hat{N}$
if, without $j$'s participation, $i$ is not connected,
i.e., $j$ appears in {\em any} path from $s$ to $i$
in $G$. 
Let $P_{i}(\boldsymbol{\theta}') \subseteq \hat{N}$ denote
the set of all critical parents of $i$ under $\boldsymbol{\theta}'$.
The buyer $j \in P_{i}(\boldsymbol{\theta}')$
closest to $i$
is called the {\em least critical parent of} $i$.
An allocation \black{$\boldsymbol{x}$} is {\em feasible}
if $\sum_{i \in N} \black{x_{i}} \leq k$, and $\black{x_{i}} = 1$ implies $i \in
\hat{N}$ for each $i \in N$.
Let $\black{\boldsymbol{X}}$ be the set of all feasible allocations.

Now we are ready to give a formal description of (direct revelation)
mechanisms\footnote{As discussed in the appendix, 
the revelation principle holds under our assumption $R$
on reportable subset $r'_{i} \subseteq r_{i}$.
Therefore, focusing on direct revelation mechanisms is w.l.o.g.}.
A mechanism $(f,t)$ consists of two components,
an allocation rule $f$ and a profile of transfer rules $(t_{i})_{i \in N}$.
An allocation rule $f$ maps a profile $\boldsymbol{\theta}'$ of reported types
to a feasible allocation $f(\boldsymbol{\theta}') \in \black{\boldsymbol{X}}$.
We sometimes use the notation of $f(\theta_{i}', \boldsymbol{\theta}'_{-i})$ instead,
especially when we focus on the report of a specific buyer $i$,
where $\boldsymbol{\theta}'_{-i}$ indicates the profile of types reported by the others.
Given
$\boldsymbol{\theta}'$,
$f_{i}(\boldsymbol{\theta}') \in \{0,1\}$ denotes
the assignment to buyer $i$.
Each transfer rule $t_{i}$ maps
a profile $\boldsymbol{\theta}'$
to a real number $t_{i}(\boldsymbol{\theta}') \in \mathbb{R}$,
which indicates the amount that buyer $i$ pays to the seller.

Here, we define several properties that mechanisms should satisfy.
Feasibility requires that
for any input, the allocation returned by the mechanism is feasible.

\begin{definition}
 A mechanism $(f,t)$ is {\em feasible} if for any $\boldsymbol{\theta}'$,
 $f(\boldsymbol{\theta}')$ is feasible.
\end{definition}

Strategy-proofness is an incentive property,
requiring that, for any buyer,
reporting its true valuation
and forwarding the information to all of its followers
is a dominant strategy. 

\begin{definition}
 Given a mechanism $(f,t)$
 and a buyer $i$ with true type $\theta_{i} = (v_{i}, r_{i})$,
 a report $\theta^{*}_{i} = (v^{*}_{i}, r^{*}_{i}) \in R(\theta_{i})$ is a {\em dominant strategy}
 if
 for
 any $\boldsymbol{\theta}'_{-i}$ and
 $\theta'_{i} \in R(\theta_{i})$,
 \[
 \textstyle
 v_{i} \cdot f(\theta^{*}_{i}, \boldsymbol{\theta}'_{-i}) - t_{i}(\theta^{*}_{i}, \boldsymbol{\theta}'_{-i})
 \geq
 v_{i} \cdot f(\theta'_{i},\boldsymbol{\theta}'_{-i}) - t_{i}(\theta'_{i}, \boldsymbol{\theta}'_{-i})
 \]
 holds.
 A mechanism $(f,t)$ is {\em strategy-proof}
 if reporting $\theta_{i}$ is a dominant strategy for any $i$ under $(f,t)$.
\end{definition}

Individual rationality is a property related
to the incentives of the buyers for participation,
which requires that
truth-telling guarantees a non-negative utility.

\begin{definition}
 A mechanism $(f,t)$ is {\em individually rational} if
 for
 any $i$,
 $\theta_{i}$, and
 $\boldsymbol{\theta}'_{-i}$,
 $v_{i} \cdot
 f(\theta_{i}, \boldsymbol{\theta}'_{-i}) - t_{i}(\theta_{i}, \boldsymbol{\theta}'_{-i})
 \geq 0$
 holds.
\end{definition}

Non-deficit is a property about seller's revenue,
which requires that
the seller's revenue cannot be negative.
Note that it does not consider
each individual transfer and thus does not
imply the non-negativity of each buyer's payment.

\begin{definition}
 A mechanism $(f,t)$ satisfies {\em non-deficit} if
 for
 any $\boldsymbol{\theta}'$,
 $\sum_{i \in N} t_{i}(\boldsymbol{\theta}') \geq 0$
 holds.
\end{definition}

Non-wastefulness is a property about the efficiency of allocation,
which requires that the mechanism allocate as many units as possible.
Note that the traditional definition of non-wastefulness
ignores the network structure,
and thus the second term in RHS
is replaced with $|N|$.

\begin{definition}
 A mechanism $(f,t)$ is {\em non-wasteful} if
 for any $\boldsymbol{\theta}'$,
 $\sum_{i \in N} f_{i}(\boldsymbol{\theta}') \geq \min\{k, |\hat{N}|\}$
 holds.
\end{definition}

\subsection{Two Na\"{i}ve Mechanisms for Comparison}

One might expect that those properties hold in na\"{i}ve mechanisms.
Indeed, we can easily find the following two candidates.
The formal definitions are in the appendix. 
We compare their performances with
that of our new mechanism in the following sections.

The first mechanism applies VCG to only the direct buyers $r_{s}$.
It satisfies strategy-proofness, individual rationality, non-deficit,
and non-wastefulness for $|r_{s}| \geq k$.
We refer to this mechanism as {\em No-Diffusion-VCG} (ND-VCG in
short).
Such a mechanism is also considered in Li et al.\ (\citeyear{li:AAAI:2017}),
although they focused on single-item auctions.

The second mechanism gives the units to buyers for free,
in the first-come-first-served manner,
which is referred to as {\em FCFS-F}.
It satisfies individual rationality, non-deficit, and non-wastefulness,
and it is strategy-proof when
earlier arrivals are not allowed,
e.g., based on ascending order of $d(\cdot)$,
as usually assumed in online mechanism design~\cite{hajiaghayi:EC:2004,todo:AAMAS:2012}.

\section{Distance-Based Mechanism}
\label{sec:mechanism}

The definition of the new mechanism
is given in Definition~\ref{def:proposed-mechanism}.
A key concept in describing the mechanism is
the {\em diffusion critical tree} $T(\boldsymbol{\theta}')$,
originally introduced in Zhao et al.\ (\citeyear{zhao:AAMAS:2018}).
Given $\boldsymbol{\theta}'$, the diffusion critical tree $T(\boldsymbol{\theta}')$
is a rooted tree, where $s$ is the root,
the nodes of $T(\boldsymbol{\theta}')$ are all the connected buyers $\hat{N}$,
and for each node $i \in \hat{N}$ and its least critical parent $j \in P(\boldsymbol{\theta}')$,
an edge $(j,i)$ is drawn.
If $P_{i}(\boldsymbol{\theta}') = \emptyset$, we draw an edge $(s,i)$.
Furthermore, given $T(\boldsymbol{\theta}')$ and a node $i$,
all of the nodes in the subtree of $T(\boldsymbol{\theta}')$ rooted at $i$
are called $i$'s {\em descendants}.
Also, given a report $\boldsymbol{\theta}'$,
a subset $S \subseteq \hat{N}$, and an integer $k' \leq k$,
$v^{*}(S, k')$ denotes the $k'$-th highest value in $S$
under $\boldsymbol{\theta}'$.
For $k' \leq 0$, let $v^{*}(S, k') = \infty$.
In addition, if $|S| < k'$, then $v^{*}(S, k') = 0$.

\begin{definition}
 \label{def:proposed-mechanism}
 Given $\boldsymbol{\theta}'$,
 first order the connected buyers
 $\hat{N}$ in ascending order of $d(\cdot)$,
 with arbitrary fixed tie-breaking.
 Note that $d(\cdot)$ is the distance from $s$ in the original
 graph, not the distance in $T(\boldsymbol{\theta}')$.
 The order $\succ$ is called the {\em priority order}.
 For each $i \not \in \hat{N}$,
 $f_{i}(\boldsymbol{\theta}') = t_{i}(\boldsymbol{\theta}') = 0$.
 For each $i \in \hat{N}$,
 let $\hat{N}_{-i}$ be the set of all connected buyers
 except $i$ and its descendants
 in $T(\boldsymbol{\theta}')$.
 It then runs as follows:
 \begin{algorithmic}[1]
  \State{$k' \leftarrow k, W \leftarrow \emptyset$}
  \For {each $i \in \hat{N}$ selected in the order of $\succ$}
  \State {$p_{i} \leftarrow v^{*}(\hat{N}_{-i} \setminus W, k')$}
  \If {$v'_{i} \geq p_{i}$}
  \State \hspace*{\algorithmicindent} {$f_{i}(\boldsymbol{\theta}') \leftarrow 1, t_{i}(\boldsymbol{\theta}') \leftarrow p_{i}$}
  \State{$k' \leftarrow k'-1, W \leftarrow W \cup \{i\}$}
  \Else
  \State{$f_{i}(\boldsymbol{\theta}') \leftarrow 0, t_{i}(\boldsymbol{\theta}') \leftarrow 0$}
  \EndIf
  \EndFor
 \end{algorithmic}
\end{definition}

The following example demonstrates
how the distance-based mechanism works.
The network of buyers, defined by $\boldsymbol{r}$,
as well as their true values,
is shown in Fig.~\ref{fig:ex}.

\begin{example}
 \label{ex:ex}
 Consider three units
 and
 seven buyers $N = \{i_{1}, i_{2}, \ldots, i_{7}\}$.
 Each vertex in the left figure of Fig.~\ref{fig:ex} corresponds to a buyer,
 and the number in each vertex denotes her true valuation.
 The priority order
 is
 given as $i_{1} \succ i_{2} \succ \cdots \succ i_{7}$.
 Assume that every buyer forwards the information
 to all of her followers,
 i.e.,
 $\hat{N} = N$.
 The corresponding diffusion critical tree
 is given on the right in Fig.~\ref{fig:ex}.

 The assignment to buyers is computed one-by-one,
 in the priority order.
 For buyer $i_{1}$, the price is given as
 $p_{i_{1}} = v^{*}(\hat{N}_{-i_{1}} \setminus W, k-|W|)
 = v^{*}(\{i_{2},i_{3},i_{4},i_{5},i_{6},i_{7}\}, 3) = v_{i_{5}} = 50$.
 Since $p_{i_{1}} > v_{i_{1}}$,
 she does not win a unit.
 For buyer $i_{2}$, the price is given as
 $p_{i_{2}} = v^{*}(\{i_{1},i_{3},i_{5},i_{6},i_{7}\}, 3) = v_{i_{7}} = 40$.
 Note that
 $v_{i_{4}}$
 is ignored because $i_{2}$ is a critical parent of $i_{4}$.
 Since $v_{i_{2}} > p_{i_{2}}$,
 she wins a unit;
 $W$ is updated to $\{i_{2}\}$, and $k$ is decremented to $2$.
 For $i_{3}$, the price is given as $p_{i_{3}} = v^*(\{i_{1},i_{4}\}, 2) = v_{i_{1}} = 30$.
 Since $v_{i_{3}} > p_{i_{3}}$, she wins a unit;
 $W$ is updated to $\{i_{2}, i_{3}\}$, and $k$ is decremented to $1$.
 For $i_{4}$, the price is given as $p_{i_{4}} = v^*(\{i_{1},i_{5},i_{6},i_{7}\}, 1) = v_{i_{6}} = 66$.
 Since $p_{i_{4}} > v_{i_{4}}$,
 she does not win a unit.

 For $i_{5}$, the price is given as $p_{i_{5}} = v^*(\{i_{1},i_{4}\}, 1) = v_{i_{4}} = 45$.
 Since $v_{i_{5}} > p_{i_{5}}$,
 she wins a unit;
 $W$ is updated to $\{i_{2}, i_{3}, i_{5}\}$ and
 $k$ is decremented to $0$.
 Since no unit remains, the prices for the remaining buyers, $i_{6}$ and $i_{7}$,
 become infinity,
 and thus neither of the buyers wins a unit.
 To sum up, $i_{2}$, $i_{3}$, and $i_{5}$ are winners,
 who pay 40, 30, and 45 respectively.
\end{example}

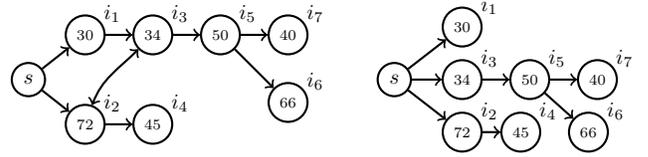
\begin{figure}[t]
\begin{center}
	\begin{tikzpicture}[scale=0.8,transform shape]
	\node[draw, thick, circle] (s) {$s$};
	\node[draw, thick, circle,  label={[xshift = 0.45cm, yshift=-0.25cm] $i_{1}$}] (1) [above right = 0.3cm and 0.5cm of s] {\scriptsize $30$};
	\node[draw, thick, circle,  label={[xshift = 0.45cm, yshift=-0.25cm] $i_{2}$}] (2) [below right = 0.3cm and 0.5cm of s] {\scriptsize $72$};
	\node[draw, thick, circle,  label={[xshift = 0.45cm, yshift=-0.25cm] $i_{3}$}] (3) [right = 0.45cm of 1] {\scriptsize $34$};
	\node[draw, thick, circle,  label={[xshift = 0.45cm, yshift=-0.25cm] $i_{4}$}] (4) [right = 0.45cm of 2] {\scriptsize $45$};
	\node[draw, thick, circle,  label={[xshift = 0.45cm, yshift=-0.25cm] $i_{5}$}] (5) [right = 0.45cm of 3] {\scriptsize $50$};
	\node[draw, thick, circle,  label={[xshift = 0.45cm, yshift=-0.25cm] $i_{7}$}] (7) [right = 0.45cm of 5] {\scriptsize $40$};
	\node[draw, thick, circle,  label={[xshift = 0.45cm, yshift=-0.25cm] $i_{6}$}] (6) [below = 0.45cm of 7] {\scriptsize $66$};
	\node[draw, thick, circle] (ss) [right = 5.5cm of s] {$s$};
	\node[draw, thick, circle,  label={[xshift = 0.45cm, yshift=-0.25cm] $i_{3}$}] (s3) [right = 0.5cm of ss] {\scriptsize $34$};
	\node[draw, thick, circle,  label={[xshift = 0.45cm, yshift=-0.25cm] $i_{1}$}] (s1) [above = 0.2cm of s3] {\scriptsize $30$};
	\node[draw, thick, circle,  label={[xshift = 0.45cm, yshift=-0.25cm] $i_{2}$}] (s2) [below = 0.2cm of s3] {\scriptsize $72$};
	\node[draw, thick, circle,  label={[xshift = 0.45cm, yshift=-0.25cm] $i_{4}$}] (s4) [right = 0.3cm of s2] {\scriptsize $45$};
	\node[draw, thick, circle,  label={[xshift = 0.45cm, yshift=-0.25cm] $i_{5}$}] (s5) [right = 0.45cm of s3] {\scriptsize $50$};
	\node[draw, thick, circle,  label={[xshift = 0.45cm, yshift=-0.25cm] $i_{6}$}] (s6) [right = 0.45cm of s4] {\scriptsize $66$};
	\node[draw, thick, circle,  label={[xshift = 0.45cm, yshift=-0.25cm] $i_{7}$}] (s7) [right = 0.45cm of s5] {\scriptsize $40$};

	\path[draw,thick]
	(s) edge[->]  (1)
	(s) edge[->]  (2)
	(2) edge[<->,out=70, in=225]  (3)
	(1) edge[->]  (3)
	(3) edge[->]  (5)
	(5) edge[->]  (7)
	(5) edge[->]  (6)
	(2) edge[->]  (4);

	\path[draw,thick]
	(ss) edge[->]  (s1)
	(ss) edge[->]  (s2)
	(ss) edge[->]  (s3)
	(s3) edge[->]  (s5)
	(s5) edge[->]  (s7)
	(s5) edge[->]  (s6)
	(s2) edge[->]  (s4);
\end{tikzpicture}
\end{center}
\caption{Example of Buyers Network and Corresponding Diffusion Critical Tree.}
\label{fig:ex}
\end{figure}

Let us clarify how it differs from
GIDM by Zhao et al.\ (\citeyear{zhao:AAMAS:2018})
and how it maintains strategy-proofness.
GIDM first assigns, according to the reported $\boldsymbol{\theta}'$,
a certain number of units to each
subtree of $T(\boldsymbol{\theta}')$.
The buyers in a subtree then compete with each other
to buy the units assigned to it. This is something like
creating a sub-market for each subtree.
However, by not forwarding the information,
some buyer, who originally loses
due to the existence of some winning parent,
can reduce the number of units assigned to the subtree,
make the sub-market more competitive and the parent losing,
and obtain a chance to win.
This is actually the case
found by Takanashi et al.\ (\citeyear{takanashi:arXiv:2019}).

The distance-based mechanism also uses the diffusion critical tree.
However, it does not create such a sub-market for each subtree.
Instead, it has a single market with all of the units,
where buyers' priorities are defined
based on the distance $d(\cdot)$, which is not successfully manipulable;
no buyer can make the distance shorter
by not forwarding the information to her followers,
which is 
shown by Lemma 1.

\subsection{Properties of Distance-Based Mechanism}

We show feasibility, individual rationality, and non-deficit in Theorem~\ref{thm:FIRND},
non-wastefulness in Theorem~\ref{thm:NW},
and strategy-proofness in Theorem~\ref{thm:SP}.
The proofs of Theorems~\ref{thm:FIRND} and \ref{thm:NW} are in the
appendix.
Let $\hat{W}$ denote a set of winners $\{w_1, w_2, \ldots\}$
and $\hat{W}_{\succ j}$ denote $\{w \in \hat{W} \mid w \succ j\}$.

\begin{theorem}
 \label{thm:FIRND}
 The distance-based mechanism satisfies feasibility, individual rationality,
 and non-deficit.
\end{theorem}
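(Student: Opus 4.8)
The plan is to verify the three properties directly from the description in Definition~\ref{def:proposed-mechanism}, handling them one at a time; none of them requires Lemma~1 or the incentive argument of Theorem~\ref{thm:SP}.

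For feasibility I would establish two facts. First, only connected buyers can receive a unit: every $i \notin \hat{N}$ is assigned $f_{i}(\boldsymbol{\theta}') = 0$ outside the loop, and the loop ranges only over $\hat{N}$. Second, $\sum_{i \in N} f_{i}(\boldsymbol{\theta}') \le k$ follows from tracking the counter $k'$. It is initialized to $k$ and decremented by exactly one on each win, so the number of winners is $k - k'_{\mathrm{fin}}$, where $k'_{\mathrm{fin}}$ denotes its final value. The crucial observation is the convention $v^{*}(S,k') = \infty$ for $k' \le 0$: once $k'$ reaches $0$, every subsequent price $p_{i}$ equals $\infty$, and since all values lie in $\mathbb{R}_{\geq 0}$ the test $v'_{i} \ge p_{i}$ fails, so no further buyer wins and $k'_{\mathrm{fin}} \ge 0$. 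Hence at most $k$ units are allocated, and the allocation is feasible.

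For individual rationality I would split on whether a truthful buyer $i$ wins. If $i$ is unconnected or loses, then $f_{i} = t_{i} = 0$ and her utility is $0$. If $i$ wins, the loop guarantees the winning condition $v'_{i} \ge p_{i}$; under truth-telling $v'_{i} = v_{i}$ and $t_{i} = p_{i}$, so her utility is $v_{i} - p_{i} \ge 0$. Note that $p_{i} = v^{*}(\hat{N}_{-i} \setminus W, k')$ is computed from a set that excludes $i$ and her descendants, so $i$'s own value never enters $p_{i}$; this is not strictly needed for the inequality but makes the bound transparent. For non-deficit I would in fact show the stronger statement that every individual payment is non-negative: a buyer pays only when she wins, in which case $p_{i}$ is finite, and by the definition of $v^{*}$ it is either the $k'$-th highest of a set of non-negative values or equals $0$ (when $|\hat{N}_{-i} \setminus W| < k'$); either way $p_{i} \ge 0$, so $\sum_{i \in N} t_{i}(\boldsymbol{\theta}') \ge 0$.

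There is no deep obstacle here: all three properties reduce to bookkeeping on the counter $k'$ together with the sign conventions for $v^{*}$. The one step that deserves explicit care is the feasibility bound, where the $\infty$-convention for $k' \le 0$ is exactly what prevents overselling once the units run out; I would state this explicitly rather than leave it implicit in the pseudocode.
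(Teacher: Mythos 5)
Your proof is correct and follows essentially the same route as the paper's appendix proof: feasibility via the counter $k'$ together with the convention $v^{*}(\cdot,k')=\infty$ for $k'\le 0$ and the exclusion of unconnected buyers, individual rationality from the winning condition $v'_{i}\ge p_{i}$ under truth-telling, and non-deficit from the non-negativity of every winner's price. Your version merely spells out the bookkeeping (e.g., $k'_{\mathrm{fin}}\ge 0$ and the $v^{*}(S,k')=0$ case when $|S|<k'$) in more detail than the paper does.
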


\begin{theorem}
 \label{thm:NW}
 The distance-based mechanism
 is non-wasteful.
\end{theorem}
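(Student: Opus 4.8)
The plan is to reformulate non-wastefulness in terms of the number of winners and then argue by contradiction. Since $k'$ starts at $k$ and decreases by exactly one whenever a buyer is added to $W$, at termination the number of allocated units is $\sum_{i \in N} f_i(\boldsymbol{\theta}') = |\hat W| = k - k'$, where $\hat W$ is the final winner set and $k'$ its final value. Hence non-wastefulness is equivalent to $|\hat W| \ge \min\{k, |\hat N|\}$. I would assume for contradiction that $|\hat W| = m < \min\{k,|\hat N|\}$, so that simultaneously $m < k$ and $m < |\hat N|$; the latter guarantees that the loser set $\hat N \setminus \hat W$ is nonempty.

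First I would record a structural fact used throughout: if $j$ is a descendant of $i$ in $T(\boldsymbol{\theta}')$, then every path from $s$ to $j$ passes through $i$, so $d(j) > d(i)$ and $j$ is processed strictly after $i$. Consequently, at the moment any buyer $i$ is selected for processing, the current set $W$ consists exactly of the already-selected winners of higher priority, i.e.\ $W = \hat W_{\succ i}$, and none of them lies in $i$'s subtree; in particular $\hat W_{\succ i} \subseteq \hat N_{-i}$, and the current value of $k'$ equals $k - |\hat W_{\succ i}|$.

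The core of the argument is to pick $h$ to be a loser of maximum reported value $v'_h$ (ties broken arbitrarily), which exists because $m < |\hat N|$. When $h$ is processed it is rejected, so $v'_h < p_h = v^*\!\big(\hat N_{-h}\setminus \hat W_{\succ h},\, k'_h\big)$, where $k'_h = k - |\hat W_{\succ h}|$ is the value of $k'$ at that moment. Because $|\hat W_{\succ h}| \le m < k$ we have $k'_h \ge 1$, so $p_h$ is not the value $\infty$ coming from the $k' \le 0$ convention; and since $v'_h \ge 0$, the inequality $v'_h < p_h$ forces $p_h > 0$, so by the convention on $v^*$ the set $\hat N_{-h}\setminus \hat W_{\succ h}$ contains at least $k'_h$ buyers whose value is at least $p_h$, hence strictly greater than $v'_h$. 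By the maximality of $v'_h$ among losers, none of these $k'_h$ buyers can be a loser, so each is a winner; and none lies in $\hat W_{\succ h}$ by construction, so each is a winner processed after $h$. These $k'_h$ winners are therefore disjoint from $\hat W_{\succ h}$, giving $|\hat W| \ge |\hat W_{\succ h}| + k'_h = |\hat W_{\succ h}| + (k - |\hat W_{\succ h}|) = k$, which contradicts $m < k$.

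I expect the main obstacle to be isolating the correct object to contradict. Looking only at the last-processed buyer or the last loser is insufficient, since in the regime $|\hat N| > k$ a last loser can legitimately be rejected, so no contradiction arises there; the argument only closes once one focuses on the highest-value loser, whose rejection cannot be blamed on any other loser and must therefore be witnessed by $k'_h$ strictly higher-value winners. The remaining care is purely bookkeeping: verifying $k'_h \ge 1$ so the price is finite, using $v'_h \ge 0$ to pass from $v'_h < p_h$ to the existence of $k'_h$ value-dominating buyers, and checking disjointness from $\hat W_{\succ h}$ so that the two winner counts add without double-counting.
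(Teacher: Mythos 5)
Your proof is correct, and it takes a genuinely different route from the paper's. The paper argues in two stages: for $|\hat{N}| \leq k$ it checks directly that every price is $0$ (via the convention $v^{*}(S,k') = 0$ when $|S| < k'$), so all connected buyers win; for $|\hat{N}| > k$ it assumes $m < k$ winners and proves by induction that the top $m$ buyers by reported value must all be winners --- an argument that needs the monotonicity of $v^{*}$ in its first argument and careful bookkeeping with the subset $M''$ of already-processed top buyers at smaller distance --- and then shows the $(m+1)$-st highest buyer would also win, a contradiction. You instead contradict on a single well-chosen witness, a maximum-value loser $h$: from $0 \leq v'_{h} < p_{h}$ and $k'_{h} = k - |\hat{W}_{\succ h}| \geq 1$ you correctly rule out both the $\infty$ and the $0$ conventions, so $p_{h}$ is a genuine $k'_{h}$-th highest value and there are at least $k'_{h}$ buyers in $\hat{N}_{-h} \setminus \hat{W}_{\succ h}$ with value strictly above $v'_{h}$; maximality of $v'_{h}$ among losers forces all of these to be winners disjoint from $\hat{W}_{\succ h}$, giving $|\hat{W}| \geq k$ and contradicting $m < k$. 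Your argument needs no induction, no monotonicity manipulations of $v^{*}$, and no case split --- assuming $m < \min\{k, |\hat{N}|\}$ simultaneously yields $m < k$ and a nonempty loser set, so both regimes are handled uniformly (in the regime $|\hat{N}| \leq k$ the contradiction simply fires earlier, since $p_{h} > 0$ is already impossible there). The identification $W = \hat{W}_{\succ h}$ and $k' = k - |\hat{W}_{\succ h}|$ at the moment $h$ is processed is immediate from the processing order, as you note; the descendant-distance fact you record is true but not actually load-bearing for your count, since the disjointness you need comes from the set difference itself. What the paper's longer induction buys is reuse: essentially the same inductive argument is recycled for Lemma 3 (the reserve-price variant, showing the top $\ell$ buyers above $v_{h}$ all win), where your maximal-loser argument would need a small adaptation to exclude dummy buyers from the witness count; what yours buys is a substantially shorter, more self-contained proof of the theorem itself.
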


\begin{theorem}
 \label{thm:SP}
 The distance-based mechanism
 is strategy-proof.
\end{theorem}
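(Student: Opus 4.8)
The plan is to fix an arbitrary buyer $i$ and an arbitrary profile $\boldsymbol{\theta}'_{-i}$, and to show that truthful reporting $\theta_i=(v_i,r_i)$ weakly dominates every $\theta'_i=(v'_i,r'_i)\in R(\theta_i)$. I would first record two structural facts. (F1) By Lemma~1, withholding information can only weakly increase distances and never shorten a shortest path, so $i$ can neither change her own position in the priority order $\succ$ (which depends only on $d(i)$) nor improve the priority of any other buyer; hence $i$ is always processed at the same point in the loop. (F2) Every descendant of $i$ in $T(\boldsymbol{\theta}')$ has distance strictly larger than $d(i)$, so all descendants are processed strictly after $i$ and never belong to the winner set $W$ at the moment $i$ is handled; moreover they are excluded from $\hat N_{-i}$ by definition, so they never enter $i$'s own price $p_i=v^*(\hat N_{-i}\setminus W,k')$. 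These two facts are exactly what rule out the GIDM-style manipulation discussed before the theorem.

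I would then split a deviation into a value part and a diffusion part, handled in this order. For the value part I fix the reported edge set at an arbitrary $r'_i$ and vary only $v'_i$, aiming at a Myerson-style argument: (i) $i$'s winning indicator is monotone non-decreasing in $v'_i$, and (ii) whenever $i$ wins she is charged exactly her critical value (the infimum of reports at which she wins), independent of $v'_i$. The delicate point is that $v'_i$ does not enter only the test $v'_i\ge p_i$ at $i$'s own step: for every higher-priority $j$ with $i\in\hat N_{-j}$, the value $v'_i$ lies in the pool used to compute $p_j$ and can raise it, perturbing the state $(W,k')$ that $i$ faces. I would control this feedback by a monotone coupling: raising $v'_i$ weakly raises each such $p_j$, so it can only flip higher-priority buyers from winning to losing, and a short case analysis on order statistics shows that each such flip (which returns $j$ to $i$'s pool but also increments $k'$) weakly lowers $v^*(\hat N_{-i}\setminus W,k')$. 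Propagating this by induction over $\succ$ yields monotonicity of $i$'s allocation and shows that, above the threshold, the charged price stabilizes at the critical value, giving (i) and (ii); value-truthfulness then follows from the standard single-parameter argument.

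For the diffusion part I fix $v'_i=v_i$ and compare full forwarding $r_i$ with any $r'_i\subseteq r_i$. By (F1) and (F2), reducing forwarding leaves the set $\hat N_{-i}$ unchanged: it can only disconnect former descendants of $i$ (already excluded from $\hat N_{-i}$) and never turns a non-descendant into a competitor of $i$ or a member of $W$ at $i$'s turn. Its only effect on $i$ is indirect, through higher-priority buyers: for $j\succ i$ with $i\in\hat N_{-j}$, deleting $i$'s descendants from $j$'s pool can only lower $p_j$ and make such $j$ weakly more likely to win, which weakly shrinks the remaining supply $k'$ and enlarges $W$ by the time $i$ is processed; both effects weakly raise $p_i$. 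Hence partial forwarding never helps $i$, and full forwarding is optimal given truthful value. Composing the two steps --- value-truthfulness holds for every fixed $r'_i$, and full forwarding is optimal at the truthful value --- gives, for any $(v'_i,r'_i)$, utility no larger than under $(v_i,r'_i)$, which is no larger than under $(v_i,r_i)$, establishing that $\theta_i$ is dominant.

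I expect the main obstacle to be the value part: making the monotone coupling rigorous, i.e., proving that the feedback of $v'_i$ into the higher-priority price computations preserves allocation monotonicity and that the charged price equals the report-independent critical value, despite $p_i$ being a priori a function of $v'_i$ through the state $(W,k')$. The descendant-exclusion property (F2) and Lemma~1 (F1) are the levers that keep the diffusion part clean and reduce the whole problem to this single-parameter analysis.
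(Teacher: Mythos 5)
Your overall architecture is exactly the paper's: you factor any deviation $(v'_i,r'_i)$ through the intermediate type $(v_i,r'_i)$, proving a diffusion lemma at the true value and a value lemma at fixed forwarding, which is precisely how the paper reduces Theorem~3 to its Lemmas~1 and~2. Your diffusion part also coincides with the paper's Lemma~1 in substance: withholding can only disconnect buyers who are already descendants of $i$ (hence already excluded from $\hat{N}_{-i}$) or push non-descendants later in $\succ$; the only channel to $i$ is that shrinking the pools $\hat{N}_{-j}$ of higher-priority $j$ lowers their prices, weakly enlarging the winner set $W$ and shrinking supply at $i$'s turn, which weakly raises $p_i$. (The paper's own Lemma~1 is argued at the same level of informality, so you are not behind it here.)

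The value part is where you genuinely diverge, and where your sketch has a concrete soft spot. The paper does not establish the global monotonicity/critical-value structure you aim for. It instead introduces $\pi_i := v^{*}(\hat{N}_{-i},k)$, a lower bound on $p_i$ that is independent of $i$'s reported value, and argues locally: if $v_i \leq \pi_i$ no report yields positive utility; if $v_i > \pi_i$, then $p_i > \pi_i$ can only be caused by a higher-priority winner $j$ with $v'_j \leq \pi_i$ lying in a different branch (if $j$ is $i$'s ancestor then $i \notin \hat{N}_{-j}$, so $v'_i$ never enters $j$'s price), and dislodging such a $j$ by over-bidding would force $v'_j \geq v_i > \pi_i \geq v'_j$, a contradiction. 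This sidesteps your coupling entirely. Moreover, your coupling claim as stated --- ``raising $v'_i$ \ldots can only flip higher-priority buyers from winning to losing'' --- is false at face value: when an early $j_1$ flips from winner to loser, every later higher-priority buyer's pool regains $j_1$ and its residual supply grows by one, so by $v^{*}(S\cup\{j_1\},m+1)\leq v^{*}(S,m)$ their prices weakly \emph{fall}, and a previously losing $j_2$ can flip to winning. Your induction over $\succ$ must therefore handle \emph{swaps} in the winner set, not monotone growth; in the single-swap case one can verify by order statistics that the newly winning $j_2$ satisfies $v'_{j_2} \geq v'_{j_1}$, so the excluded winner is replaced by one of weakly larger value and $p_i$ still weakly falls, but controlling arbitrary cascades of such swaps is exactly the technical burden the paper's $\pi_i$ device avoids. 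In short: your route looks completable but is substantially heavier than the published proof; for the value lemma I would adopt the report-independent lower bound $\pi_i$ and the branch/dislodging contradiction rather than the Myerson-style coupling.
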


\begin{proof}
 Let $(f,t)$ be the distance-based mechanism.
 It suffices to show that
 (I) a buyer has no incentive not to forward information to her
 followers, and that
 (II) a buyer cannot obtain any gain by misreporting her value.
 That is, for any $\theta_{i} = (v_{i}, r_{i})$
 and $\theta'_{i} = (v'_{i}, r'_{i}) \in R(\theta_{i})$,
 consider an intermediate type $\theta_{i}^{m} = (v_{i}, r'_{i})$.
 The strategy-proofness condition thus holds from
 (I)
 $
 v_{i} \cdot f_{i}(\theta_{i}, \boldsymbol{\theta}'_{-i}) - t_{i}(\theta_{i}, \boldsymbol{\theta}'_{-i})
 \geq
 v_{i} \cdot f_{i}(\theta^{m}_{i}, \boldsymbol{\theta}'_{-i}) - t_{i}(\theta^{m}_{i}, \boldsymbol{\theta}'_{-i})
 $
 and
 (II)
 $
 v_{i} \cdot f_{i}(\theta^{m}_{i}, \boldsymbol{\theta}'_{-i}) - t_{i}(\theta^{m}_{i}, \boldsymbol{\theta}'_{-i})
 \geq
 v_{i} \cdot f_{i}(\theta'_{i}, \boldsymbol{\theta}'_{-i}) - t_{i}(\theta'_{i}, \boldsymbol{\theta}'_{-i})
 $.
 These inequalities are proven in Lemmas~\ref{lem:r_i} and \ref{lem:v_i}.
\end{proof}

\begin{lemma}
 \label{lem:r_i}
 For any $i$,
 $\theta_{i} = (v_{i}, r_{i})$,
 $\boldsymbol{\theta}'_{-i}$, and
 $\theta'_{i} = (v_{i}, r'_{i})$ s.t.\ $r'_{i} \subset r_{i}$,
 $v_{i} \cdot f_{i}(\theta_{i}, \boldsymbol{\theta}'_{-i}) - t_{i}(\theta_{i}, \boldsymbol{\theta}'_{-i})
 \geq
 v_{i} \cdot f_{i}(\theta'_{i}, \boldsymbol{\theta}'_{-i}) - t_{i}(\theta'_{i}, \boldsymbol{\theta}'_{-i})$
 holds.
\end{lemma}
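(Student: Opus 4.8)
The plan is to reduce the claim to a single inequality between the prices $i$ faces under the two reports, and then to prove that inequality by a counting argument. Write $p_i(\cdot)$ for the price computed for $i$ in the run on the indicated report. Since $\theta_i = (v_i, r_i)$ and $\theta'_i = (v_i, r'_i)$ share the same value $v_i$, and the mechanism gives $i$ a unit exactly when $v_i \ge p_i$ (charging $p_i$) and otherwise allocates nothing at zero payment, the utility $v_i \cdot f_i - t_i$ equals $\max\{0,\, v_i - p_i\}$, which is nonincreasing in $p_i$ (using the convention $p_i = \infty$ when no unit remains). Hence it suffices to show $p_i(\theta_i) \le p_i(\theta'_i)$: withholding information to the followers in $r_i \setminus r'_i$ can only raise the price $i$ pays.

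First I would record two invariances. Because $i$'s report controls only the edges leaving $i$, no shortest path from $s$ into $i$ uses such an edge, so $d(i)$, the connectivity of $i$, and the entire set of buyers preceding $i$ in $\succ$ are unchanged. Second, passing from $r_i$ to $r'_i$ disconnects only a set $D$ of buyers, each of which is a descendant of $i$ in $T(\boldsymbol{\theta}')$; consequently the buyers excluded when forming $\hat N_{-i}$ (namely $i$ together with its connected descendants, plus the now-disconnected $D$) are in both cases exactly $i$ together with all of $i$'s descendants under $\theta_i$, so $\hat N_{-i}$ is \emph{literally the same set} under $\theta_i$ and $\theta'_i$. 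Thus the only quantity that can change is $W = \hat W_{\succ i}$, the set of higher-priority winners.

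The key step is a reformulation of the price. Fixing $A := \hat N_{-i}$ and writing $g(t) := |\{x \in A : v_x > t\}|$ and $h(t) := |\{w \in \hat W_{\succ i} : v_w \le t\}|$, the definition $p_i = v^*(A \setminus W,\, k - |W|)$ together with $\hat W_{\succ i} \subseteq A$ yields the equivalence
\[
 p_i \le t \quad\Longleftrightarrow\quad g(t) + h(t) \le k - 1 .
\]
Since $A$, and hence $g$, is identical under the two reports, $p_i$ is monotone in $h$: if $h(t)$ is weakly larger under $\theta'_i$ than under $\theta_i$ for every $t$, then $p_i(\theta_i) \le p_i(\theta'_i)$. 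So the lemma reduces to showing that withholding weakly increases, at every threshold $t$, the number of higher-priority winners of value at most $t$. Intuitively this holds because the disconnected descendants $D$ enter $\hat N_{-j}$ only for higher-priority buyers $j$ that are \emph{not} ancestors of $i$ (an ancestor has $D$ inside its own subtree, so $D$ is excluded from its $\hat N_{-j}$); deleting $D$ can therefore only lower such buyers' thresholds and let them win more.

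Making ``win more'' precise is the hard part, because the allocation is sequential: a buyer who clears a lowered threshold earlier consumes a unit and might deny a later higher-priority buyer. I would handle this by a dichotomy. If the run on $\theta'_i$ already allocates all $k$ units to buyers in $\hat W_{\succ i}$, then no unit remains when $i$ is reached, $p_i(\theta'_i) = \infty$, and the inequality is immediate. Otherwise units are never exhausted among the buyers preceding $i$, and I would argue by induction along $\succ$ that every higher-priority winner under $\theta_i$ remains a winner under $\theta'_i$, giving the required pointwise domination of $h$. The delicate inductive step is exactly the tension between the two effects: deleting $D$ lowers a buyer's competitor count (via $g$), while the extra earlier winners raise the winner count (via $h$). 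Closing it amounts to showing, at each processed buyer $j$, that the removed high-value descendants seen by $j$ are at least as numerous as the newly created low-value winners ahead of $j$. This accounting is where the real work lies, and I expect it to be the main obstacle.
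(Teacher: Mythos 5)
Your reduction is sound and, up to the decisive step, is actually \emph{more} careful than the paper's own argument. The rewriting of utility as $\max\{0, v_i - p_i\}$, the invariance of the priority prefix (no shortest path to a buyer preceding $i$ can pass through $i$), the observation that $\hat{N}_{-i}$ is literally unchanged (every buyer disconnected by the withholding has $i$ as a critical parent, hence was already excluded from $\hat{N}_{-i}$ as a descendant), and the threshold characterization $p_i \le t \Leftrightarrow g(t) + h(t) \le k-1$ are all correct. The paper's proof is essentially your reduction stated informally as a two-case analysis: in case (i) the disconnected buyers are descendants, already absent from $\hat{N}_{-i}$, so they influence $p_i$ only by lowering the prices of buyers $j' \succ i$, which can only create more early winners and thereby raise $i$'s price; in case (ii) buyers whose distance grows already satisfied $i \succ j$, so they are irrelevant to $p_i$. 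The paper simply asserts the directionality of the cascade effect (``there is a chance that $i$'s price increases'') without the accounting you correctly identify as necessary.

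That accounting, however, is exactly where your proposal stops, by your own admission, and this is a genuine gap: the pointwise domination $h'(t) \ge h(t)$ carries the entire content of the lemma, and nothing in the proposal establishes it. Worse, the inductive invariant you propose for the non-exhaustion branch --- every higher-priority winner under $\theta_i$ remains a winner under $\theta'_i$ --- is false in general, because the cascade can \emph{swap} winners. Take $k=2$, direct buyers $j_1 \succ j_2 \succ z$ with values $50, 50, 55$, let $i$ (value $1$) be $j_2$'s sole follower and $D$ (value $100$) be $i$'s sole follower. Under truth, $j_1$ faces $v^*(\{50,55,1,100\},2)=55$ and loses, while $j_2$ faces $v^*(\{j_1,z\},2)=50$ and wins. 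When $i$ withholds, $D$ vanishes: $j_1$ now faces $v^*(\{50,55,1\},2)=50$ and wins, whereupon $j_2$ faces $v^*(\{z\},1)=55$ and loses --- a truth-winner preceding $i$ is eliminated, replaced by the equal-valued $j_1$. In this instance your exhaustion branch happens to rescue the conclusion ($z$ also wins and the units run out before $i$), but the example shows the winner set is not monotone under containment; any induction must instead track the winners' value profile (your $h$) and handle flips interleaved with partial unit consumption, which is precisely the regime neither branch of your dichotomy covers. As it stands, the proposal is a correct and useful reduction plus an unproven core claim, not yet a proof of the lemma.
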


\begin{proof}
 By not forwarding the information, $i$ can affect another buyer $j$
 in one of the following ways:
 (i) buyer $j$, who is originally a descendant of $i$ in $T(\boldsymbol{\theta}')$, becomes
 disconnected,
 (ii) for buyer $j$, which originally satisfies $i \succ j$,
 the distance $d(j)$ becomes larger.
 In case (i), $j$ is originally not included in $\hat{N}_{-i}$.
 Furthermore, making $j$ disconnected might decrease the
 price of other buyers $j'$ s.t.\ $j' \succ i$. Then
 there is a chance that $i$'s price increases. Thus, not forwarding
 the information is useless in case (i).
 In case (ii), even when $d(j)$ becomes larger,
 $i \succ j$ holds originally, and $i$'s price does not change.
 Thus,
 not forwarding the information is futile.
\end{proof}

\begin{lemma}
 \label{lem:v_i}
 For any $i$,
 $\theta_{i} = (v_{i}, r_{i})$,
 $\boldsymbol{\theta}'_{-i}$, and
 $\theta'_{i} = (v'_{i}, r_{i})$,
 $v_{i} \cdot f_{i}(\theta_{i}, \boldsymbol{\theta}'_{-i}) - t_{i}(\theta_{i}, \boldsymbol{\theta}'_{-i})
 \geq
 v_{i} \cdot f_{i}(\theta'_{i}, \boldsymbol{\theta}'_{-i}) - t_{i}(\theta'_{i}, \boldsymbol{\theta}'_{-i})$
 holds.
\end{lemma}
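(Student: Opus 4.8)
The plan is to treat Lemma~\ref{lem:v_i} as the single-parameter part of strategy-proofness: with the reported follower set $r_i$ (and $\boldsymbol{\theta}'_{-i}$) held fixed, buyer $i$'s report collapses to a single real number $v'_i$, and a binary-allocation mechanism over one real parameter is strategy-proof precisely when the winning region is an up-set in $v'_i$ (allocation monotonicity) and the payment charged upon winning is the induced critical value, independent of $v'_i$ (threshold pricing). So I would establish two facts: (M) if $i$ wins reporting some $a$, then $i$ wins reporting every $b>a$; and (P) the price $p_i$ charged to a winning $i$ does not depend on $v'_i$. Granting (M) and (P), let $c_i$ be the common price; then reporting at least $c_i$ wins at price $c_i$ and reporting below $c_i$ loses at price $0$, so the truthful utility $\max\{0,\,v_i-c_i\}$ weakly dominates the utility under any $\theta'_i=(v'_i,r_i)$, which is exactly the claimed inequality.

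First I would record what is, and is not, value-independent. Since the priority order $\succ$ is built from $d(\cdot)$ and the descendant sets come from $T(\boldsymbol{\theta}')$, both $i$'s position in $\succ$ and the competitor set $\hat{N}_{-i}$ are unaffected by $v'_i$. The subtlety, and the reason (P) is not immediate, is that the state $(W,k')$ reached when $i$ is processed can depend on $v'_i$: for every higher-priority $j\succ i$ that is not an ancestor of $i$ in $T(\boldsymbol{\theta}')$ one has $i\in\hat{N}_{-j}$, so $v'_i$ enters the order statistic $v^{*}(\hat{N}_{-j}\setminus W,\,k')$ defining $p_j$ and can flip whether $j$ wins, thereby altering $W$ and $k'$ before $i$'s turn.

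The engine of the proof is a coupling of the two executions for $a<b$, run in lockstep along $\succ$. By induction over the order I would establish that raising $v'_i$ can only weakly raise the price faced by each higher-priority $j$, so such a buyer can only flip from winner to loser, never the reverse; consequently the winners reached before $i$ under $b$ form a subset of those under $a$, the remaining count $k'$ under $b$ exceeds that under $a$ by exactly the number of flipped buyers, and — since any buyer $\succ i$ has $d(\cdot)\le d(i)$ and hence cannot be a descendant of $i$, so lies in $\hat{N}_{-i}$ — every flipped buyer re-enters $i$'s competitor set $\hat{N}_{-i}\setminus W$. Thus at $i$'s own step the competitor set grows by exactly the set $D$ of flipped higher-priority winners while $k'$ grows by $|D|$. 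If each $w\in D$ has value at least $p_i^{a}$ (the price $i$ faces under $a$), then adding $|D|$ elements that all sit at or above the old $k'$-th largest value while advancing the rank by $|D|$ leaves the $k'$-th order statistic unchanged, giving $p_i^{b}=p_i^{a}$; this yields (P), and since $p_i^{a}\le a<b$ it yields (M).

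The hard part is precisely this value bound on the flipped set: controlling the cascade so that a higher-priority buyer can lose only when its value is at least the relevant order statistic, which is what keeps $i$'s price from drifting as $v'_i$ moves through the winning region. I expect to prove it by strengthening the coupling invariant carried through the induction over $\succ$ — maintaining, at every step and not merely at $i$'s step, both the nesting of the winner sets and the inequality $p_j^{b}\ge p_j^{a}$, so that each flipped buyer's value is pinned at or above the price it failed to meet — rather than by any local one-step argument. The boundary phenomena, namely ties in $\succ$ and the degenerate conventions $v^{*}(S,k')=\infty$ for $k'\le 0$ and $v^{*}(S,k')=0$ for $|S|<k'$, would be handled separately but routinely.
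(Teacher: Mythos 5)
Your reduction of the lemma to allocation monotonicity (M) plus value-independent threshold pricing (P) is a sound framing --- both facts are indeed equivalent to the claim here, since losers pay zero and a winner's price never exceeds her report. The problem is that the coupling invariant you make the engine of the proof is false. Take $k=2$ and the network $s\to w$, $s\to j_3$, $w\to h$, $j_3\to i$, with reported values $v'_w=5$, $v'_{j_3}=6$, $v'_h=100$ and priority $w\succ j_3\succ h\succ i$; then $\hat{N}_{-w}=\{j_3,i\}$ and $\hat{N}_{-j_3}=\{w,h\}$ (the high-value $h$ is $w$'s descendant, so $w$ is a cheap winner). With $v'_i=a=4$: $p_w=v^*(\{6,4\},2)=4\le 5$, so $w$ wins; then $p_{j_3}=v^*(\{h\},1)=100>6$, so $j_3$ loses. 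With $v'_i=b=7$: $p_w=v^*(\{6,7\},2)=6>5$, so $w$ loses; then $p_{j_3}=v^*(\{5,100\},2)=5\le 6$, so $j_3$ wins. Raising $v'_i$ thus flips a higher-priority \emph{loser into a winner} and drops $p_{j_3}$ from $100$ to $5$: the pre-$i$ winner sets are not nested and higher-priority prices do not weakly rise, so your lockstep induction collapses at the first cascade. The structural reason is that when a winner flips to a loser it does two things at once: it frees a unit (rank bump) \emph{and} re-injects its value into every later pool $\hat{N}_{-j}\setminus W$; if that value is low relative to the later pool --- exactly the situation for winners whose competitors are their own descendants --- the later order statistic falls. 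Your proposed strengthening does not rescue this: in the example $v'_w=5\ge p_w^{a}=4$ and $v'_w<p_w^{b}=6$, precisely as your pinning invariant demands, yet the drop occurs anyway, because the comparison that matters is against $j_3$'s pool statistic ($100$), not against any price $w$ itself faced.

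The paper avoids cascade control entirely and argues locally at $i$'s price. It lower-bounds $p_i\ge\pi_i:=v^*(\hat{N}_{-i},k)$, a quantity independent of $v'_i$, disposes of the case $v_i\le\pi_i$ outright, and observes that $p_i$ can exceed $\pi_i$ only through a higher-priority winner $j$ with $v'_j\le\pi_i$ (a winner with $v'_j>\pi_i$ lies in the top $k-1$ of $\hat{N}_{-i}$, and removing it while decrementing the rank leaves the statistic unchanged). It then eliminates the only dangerous manipulation by a pivotality contradiction: for $i$'s bid to raise such a $j$'s price, $v_i$ must lie at or below $j$'s order statistic, which forces $v'_j\ge v_i$, contradicting $v'_j\le\pi_i<v_i$. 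In other words, the paper never claims any global monotonicity of the execution in $v'_i$ --- which, as the example shows, simply does not hold --- and if you tried to repair your route you would need to bound which buyers can flip as $v'_i$ varies, which leads you back to exactly this pivotality argument, with the coupling scaffolding doing no work.
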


\begin{proof}
 For buyer $i$, her price  $p_i$ is given as:
 $v^*(\hat{N}_{-i} \setminus \hat{W}_{\succ i}, k - |\hat{W}_{\succ i}|)$.
 It is clear that $p_i \geq v^*(\hat{N}_{-i}, k)$ holds.
 Let $\pi_i$ denote $v^*(\hat{N}_{-i}, k)$.
 $\pi_i$ is determined independently from $i$'s
 declared evaluation value. If $v_i \leq \pi_i$ holds,
 $i$ cannot gain a positive utility regardless of her declaration.
 Thus, assume $v_i > \pi_i$ holds.
 Her actual price, i.e.,
 $p_i = v^*(\hat{N}_{-i} \setminus \hat{W}_{\succ i}, k -
 |\hat{W}_{\succ i}|)$,
 can be strictly larger than $\pi_i$, if
 some buyer $j$ (where $j \succ i$) s.t.\
 $v'_j \leq \pi_i$ becomes a winner.
 Note that if $v'_j > \pi_i$ holds,
 $j$ is within the top $k-1$ winners in $\hat{N}_{-i}$;
 the fact that $j$ becomes a winner does not change $p_{i}$. 

 The only way for $i$ to decrease her price is
 to turn such a winner into a loser by over-bidding.
 Assume $j$ (where $j \succ i$) is such a winner.
 If $j$ is $i$'s ancestor, $i$ cannot affect $j$'s price.
 Thus, $j$ and $i$ are in different branches in $T(\boldsymbol{\theta}')$.
 Since $j$ is a winner,
 $v'_j \geq v^*(\hat{N}_{-j} \setminus \hat{W}_{\succ j}, k -
 |\hat{W}_{\succ j}|)$ holds. Also,
 to increase $j$'s price, $v_i$ must be smaller than or
 equal to $v^*(\hat{N}_{-j} \setminus \hat{W}_{\succ j}, k -
 |\hat{W}_{\succ j}|)$. Note that $i$ is included in
 $\hat{N}_{-j} \setminus \hat{W}_{\succ j}$. If $i$ is
 within the top $k - |\hat{W}_{\succ j}| - 1$ buyers in
 $\hat{N}_{-j} \setminus \hat{W}_{\succ j}$, even if $i$ over-bids,
 she cannot change $j$'s price.
 Thus, $v'_j \geq v_i$ holds.
 However, we assume $v'_j \leq \pi_i < v_i$ holds.
 This is a contradiction. Thus, $i$ cannot decrease her price by
 misreporting her evaluation value.
\end{proof}

\section{Efficiency Analysis}
\label{sec:efficiency}

In this section we conduct a more detailed analysis on efficiency.
We show that any winner has a value that is
in the set of top-$k$ buyers except for her descendants.
Also, the social surplus of the distance-based mechanism
is always as large as those of the two na\"{i}ve ones.
Furthermore, the worst-case inefficiency of the distance-based mechanism can be
bounded by choosing an appropriate reserve price.

\subsection{Bounded Efficiency}
Pareto efficiency in the multi-unit auction with $k$ units
requires that each buyer is a winner only if she is in the set of top-$k$ buyers,
i.e., whose value
is more than or equal to the $k$-th highest value.
However, it is not compatible with strategy-proofness
in our model with the buyers' network, since a buyer would have an
incentive for not forwarding information to her descendants if she needs to
compete with them.
Thus, we introduce a weaker concept called \emph{bounded efficiency},
which is consistent with the incentive of buyers to forward the information.
We say an allocation satisfies bounded efficiency if
each winner is in the set of top-$k$ buyers
\emph{except for its descendants}.
Also, a mechanism satisfies bounded efficiency if it always
obtains a bounded efficient allocation.
By ignoring the descendants of each buyer, the incentive of information forwarding
can still be guaranteed.\footnote{%
Note that the number of buyers, each of which is in the set of top-$k$
buyers except its descendants, can be more than $k$. Thus, it is
impossible to guarantee that all of them are winners.}
Indeed, our mechanism satisfies bounded efficiency.

\begin{proposition}
 \label{prop:top-k}
 The distance-based mechanism satisfies bounded efficiency:
 $\forall \boldsymbol{\theta}'$,
 $\forall i \in \hat{N}$
 s.t.\ $f_{i}(\boldsymbol{\theta}') = 1$,
 $
  \#\{j \in \hat{N}_{-i} \mid v'_{j} > v'_{i}\} < k.
 $
\end{proposition}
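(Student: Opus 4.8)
The plan is to fix an arbitrary report profile $\boldsymbol{\theta}'$ and a winner $i \in \hat{N}$ with $f_i(\boldsymbol{\theta}') = 1$, and to read off the bound directly from the inequality that makes $i$ a winner. Write $w = |\hat{W}_{\succ i}|$ for the number of winners with higher priority than $i$; by the description of the mechanism this is exactly the size of $W$ at the moment $i$ is processed, and the remaining supply is then $k' = k - w$. The first step is to observe that $i$ can win only if $k' \geq 1$: whenever $k - w \leq 0$ the convention $v^*(\cdot, k') = \infty$ forces $p_i = \infty > v'_i$, so no buyer can clear the price. Hence for a winner we always have $w \leq k - 1$.

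The heart of the argument is the winning inequality itself. Since $i$ wins, $v'_i \geq p_i = v^*(\hat{N}_{-i} \setminus W, \, k - w)$. I would convert this into a counting statement about $\hat{N}_{-i} \setminus W$: the number of buyers there whose reported value strictly exceeds $v'_i$ is at most $k - w - 1$. When $|\hat{N}_{-i} \setminus W| \geq k - w$ this is immediate, because $v^*(\hat{N}_{-i} \setminus W, k-w) \leq v'_i$ says the $(k-w)$-th highest value does not exceed $v'_i$, so strictly fewer than $k - w$ buyers there can beat $v'_i$. The boundary case $|\hat{N}_{-i} \setminus W| < k - w$ must be handled separately via the convention $v^*(\cdot, k - w) = 0$: here the whole set already has size at most $k - w - 1$, so the count is trivially bounded. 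Reconciling these two conventions (the value $\infty$ for depleted supply and $0$ for a too-small set) is the one place where care is needed, and I regard it as the \emph{main obstacle}, since everything else is bookkeeping.

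Finally I would combine the two parts by partitioning $\hat{N}_{-i}$ into $\hat{N}_{-i} \cap W$ and $\hat{N}_{-i} \setminus W$. The second part contributes at most $k - w - 1$ buyers above $v'_i$ by the previous step, while the first contributes at most $|\hat{N}_{-i} \cap W| \leq |W| = w$, so the total is at most $(k - w - 1) + w = k - 1 < k$, which is exactly the claim. If one prefers a clean partition in which $\hat{N}_{-i} \cap W = W$, one can note that every descendant $j$ of $i$ in $T(\boldsymbol{\theta}')$ has $i$ among its critical parents, so $i$ lies on every (hence the shortest) path from $s$ to $j$, giving $d(j) > d(i)$ and therefore $i \succ j$; thus no higher-priority winner is a descendant of $i$ and $W \subseteq \hat{N}_{-i}$. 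This refinement is not necessary, however, because the crude estimate $|\hat{N}_{-i} \cap W| \leq w$ already suffices.
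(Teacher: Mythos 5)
Your proposal is correct and follows essentially the same route as the paper's proof: both read the winning inequality $v'_{i} \geq v^{*}(\hat{N}_{-i} \setminus W, k - |W|)$ as saying that fewer than $k - |W|$ buyers in $\hat{N}_{-i} \setminus W$ strictly beat $v'_{i}$, and then absorb at most $|W|$ additional such buyers from the higher-priority winners to conclude the count is below $k$. Your extra bookkeeping (the observation that a winner needs $k - |\hat{W}_{\succ i}| \geq 1$, the separate treatment of the $v^{*} = 0$ convention when $|\hat{N}_{-i} \setminus W| < k - |W|$, and the optional remark that $W \subseteq \hat{N}_{-i}$) makes explicit what the paper leaves implicit, but does not change the argument.
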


\begin{proof}
 Let $i \in \hat{N}$ be an arbitrarily chosen winner
 and $W \subseteq \hat{N} \setminus \{i\}$ be the
 set of winners chosen before $i$ in the mechanism.
 By definition, the winner $i$ faces the price $v^{*}(\hat{N}_{-i} \setminus W, k - |W|)$.
 Since $i$ is a winner, $v'_{i} \geq v^{*}(\hat{N}_{-i} \setminus W, k - |W|)$ holds,
 implying that there are less than $k-|W|$ buyers in $\hat{N}_{-i}
 \setminus W$, whose values are strictly larger than $v'_{i}$, i.e.,
 $
  \#\{j \in \hat{N}_{-i} \setminus W \mid v'_{j} > v'_{i}\} < k - |W|
 $.
 Therefore, regardless of how many winners in $W$ have a strictly larger value
 than $v'_{i}$, it holds that
 $
  \#\{j \in \hat{N}_{-i} \mid v'_{j} > v'_{i}\} < k
 $.
\end{proof}

This property
is useful to show other characteristics
of our mechanism, e.g., Proposition~\ref{prop:EFF:domin}.
One can also easily observe that the two na\"{i}ve mechanisms violate this property.

\subsection{Social Surplus Domination}

A mechanism $(f, p)$ is said to {\em dominate}
another mechanism $(f', p')$
{\em in terms of social surplus}
if for any $N$ and any $\boldsymbol{\theta}'$, it holds that
 $
 \sum_{i \in N} v_{i} \cdot f_{i}(\boldsymbol{\theta}')
 \geq
 \sum_{i \in N} v_{i} \cdot f'_{i}(\boldsymbol{\theta}')
 $.

\begin{proposition}
 \label{prop:EFF:domin}
 The distance-based mechanism dominates both ND-VCG and FCFS-F
 in terms of social surplus, but not vice versa. 
\end{proposition}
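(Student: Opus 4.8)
The plan is to prove the two domination claims separately and then dispatch the ``not vice versa'' part with a single counterexample. I would fix an arbitrary reported profile $\boldsymbol{\theta}'$ (so that the network, the priority order $\succ$, and all reported values are determined) and compare the total value carried by the two allocations; since both mechanisms are evaluated at the same profile, this reduces each domination claim to a purely combinatorial comparison of two winner sets. Throughout I would use that, by Theorem~\ref{thm:NW} together with feasibility from Theorem~\ref{thm:FIRND}, the distance-based mechanism has exactly $m := \min\{k,|\hat N|\}$ winners; that FCFS-F has exactly $m$ winners, namely the $m$ buyers served earliest in ascending $d(\cdot)$ order, which we may take to be the first $m$ in $\succ$; and that ND-VCG has exactly $m_0 := \min\{k,|r_s|\}$ winners, namely the $m_0$ highest-valued direct buyers.

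Write $\hat{W}$ for the distance-based winners and $A$ for the winner set of the naive mechanism under comparison. After cancelling the common winners $\hat{W}\cap A$, each domination claim reduces to showing $\sum_{w\in\hat{W}\setminus A} v'_w \ge \sum_{a\in A\setminus\hat{W}} v'_a$, and I would establish this by exhibiting a value-nondecreasing injection $\phi:A\setminus\hat{W}\to\hat{W}\setminus A$, i.e.\ with $v'_{\phi(a)}\ge v'_a$ for every $a$. Note that $|\hat{W}\setminus A|\ge|A\setminus\hat{W}|$ since $|\hat{W}|\ge|A|$ in both cases, so the codomain is large enough and the substance is the value inequality. The engine of the injection is the losing condition of the mechanism: if a naive winner $a$ fails to win under the distance-based mechanism, then at the step $a$ was processed we had $v'_a < p_a = v^*(\hat{N}_{-a}\setminus W_a, k'_a)$, where $W_a$ and $k'_a$ are the values of $W$ and $k'$ at that point; hence at least $k'_a$ still-eligible buyers carried a strictly larger value, and these higher-valued competitors are the source of the replacements to which I would match $a$.

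For FCFS-F the matching is facilitated by a clean order fact: every buyer in $A\setminus\hat{W}$ has priority rank at most $m$, whereas every buyer in $\hat{W}\setminus A$ has rank greater than $m$, so the replacements are exactly the low-priority, high-value winners the distance-based mechanism selects in place of close-but-cheap buyers. For ND-VCG I would additionally exploit that direct buyers sit at distance $1$ and are processed first, and crucially that a direct buyer's own descendants are excluded from $\hat{N}_{-a}$; hence a top-$k$ direct buyer can only be displaced by a strictly higher-valued \emph{non-descendant}, which is precisely a buyer eligible to appear in $\hat{W}$ as its replacement. Finally, for the ``not vice versa'' claim I would revisit Example~\ref{ex:ex}: there the distance-based surplus is $72+34+50=156$, while ND-VCG (restricted to the two direct buyers) attains only $72+30=102$ and FCFS-F (serving $i_1,i_2,i_3$) attains $72+30+34=136$, each strictly below $156$; since the forward domination gives $\ge$ at every profile, a single strict instance shows neither naive mechanism dominates.

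The main obstacle I anticipate is making the injection $\phi$ globally consistent rather than pairwise: the prices $p_a$ and the eligible sets $\hat{N}_{-a}\setminus W_a$ are order-dependent and entangled through the diffusion-critical-tree structure (descendant exclusion) and through the evolving winner set, so naively mapping each loser to a witness of its own losing condition may reuse the same winner for several losers. I would resolve this by proving the stronger sorted-domination statement that the $t$-th largest value in $\hat{W}$ is at least the $t$-th largest value in $A$ for every $t$, which yields the injection automatically; establishing this monotone statement by tracking, step by step along $\succ$, how many still-higher-valued buyers remain eligible as units are consumed is where the real work lies.
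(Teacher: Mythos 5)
Your setup and the strictness check are sound (in Example~\ref{ex:ex} the distance-based surplus is indeed $156$, against $102$ for ND-VCG and $136$ for FCFS-F, which settles ``not vice versa''), and your route is genuinely different from the paper's: the paper argues incrementally, starting from $r_s = \hat{N}$, where the winner sets of the distance-based mechanism and ND-VCG coincide with the top-$k$ buyers, and then adding buyers one at a time in ascending distance order, claiming each arrival displaces a winner only in favor of a strictly higher value; you instead compare the two final winner sets at a fixed profile via a value-nondecreasing injection. The ND-VCG half of your plan does go through along the lines you sketch: a direct buyer is never anyone's descendant (she is a child of $s$ in $T(\boldsymbol{\theta}')$), so every direct buyer lies in $\hat{N}_{-w}$ for \emph{every} buyer $w$, and a counting argument at the moment the $(k-t+1)$-st winner with value below the $t$-th highest direct value is selected produces a contradiction with her winning price. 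This is where your observation about descendant exclusion is exactly right.

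The gap is the FCFS-F half, which you explicitly defer as ``where the real work lies'': the sorted-domination lemma is asserted, not proven, and that is precisely where descendant exclusion cuts the other way. For ND-VCG the high-valued witnesses (direct buyers) can never be hidden from a later winner's price computation, but for FCFS-F a high-value buyer inside the first-$m$ prefix can be a \emph{descendant} of a low-value early winner $w$, hence absent from $\hat{N}_{-w}$, so $w$ wins cheaply without any higher-valued replacement being forced at that step. One must then argue that such hidden prefix buyers, processed later (note they can never lose by unit exhaustion, since a buyer of priority rank at most $k$ always faces $k' \geq 1$), either win or hand off the displacement to a strictly higher-valued eventual winner, and that these replacements are globally distinct across all displaced prefix buyers--a cascading bookkeeping argument that constitutes essentially the whole content of the FCFS-F comparison and is missing from your outline. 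To be fair, the paper's own incremental proof is also informal about cascades (an arriving buyer can simultaneously change several prices, the winner set, and even the critical tree), but it does supply an argument for both comparisons, whereas your proposal, as written, completes only one of the two dominations.
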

\begin{proof}
 When $|\hat{N}| \leq k$, every buyer receives a unit both in the distance-based
 mechanism and in ND-VCG. We then consider the cases of $|\hat{N}| > k$.
 First observe that, when $r_{s} = \hat{N}$,
 i.e., there only exist the direct buyers,
 the set of winners in both mechanisms coincides,
 so that the top-$k$ buyers win a unit; this is obvious from the definition for ND-VCG, and
 it also holds for the distance-based mechanism from
 Proposition~\ref{prop:top-k}.

 Furthermore, 
 when we add new buyers into the network one by one,
 in the ascending order of their distances from the source,
 an original winner becomes a loser in the distance-based mechanism
 only when her value is lower than the value of the new buyer who have just been added.
 Therefore, the arrival of a new buyer then weakly increases the social surplus,
 while it does not change that of ND-VCG.
 Also, there exists a case where
 the social surplus strictly increases.
 Thus, the distance-based mechanism dominates ND-VCG but not vice
 versa.

 In the distance-based mechanism,
 each of the first $k$ buyers, who is a winner in FCFS-F,
 loses only when
 there exists some buyer who arrives later and has a larger value.
 The distance-based mechanism therefore dominates \black{FCFS-F}, but not vice
 versa.
\end{proof}

\subsection{Worst-Case Efficiency Loss}

When the seller wants to maximize revenue,
it is natural to consider introducing a reserve price,
i.e., the threshold bidding value for each buyer
to own the right to win a unit~\cite{myerson:MathOR:1981}.
Letting $v_{h}$ be the reserve price that the seller introduces,
the distance-based mechanism with a reserve price $v_h$
is then implemented by adding $k$ dummy vertices
with value $v_h$ in $T(\boldsymbol{\theta'})$,
each of which is connected only to $s$ (see Fig.~\ref{fig:dummy}),
while in line 2 of the algorithm
the dummies are not considered.
In other words, those dummies only affect $\hat{N}_{-i}$
for each $i \in \hat{N}$ and have no chance to win.
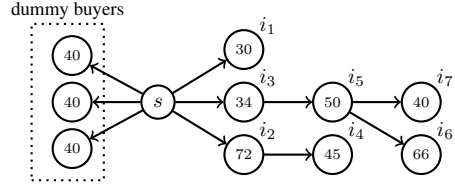
\begin{figure}[t]
\begin{center}
	\begin{tikzpicture}[scale=0.8, transform shape]
	\node[draw, thick, circle] (s) {$s$};
	\node[draw, thick, circle,  label={[xshift = 0.4cm, yshift=-0.2cm] $i_{3}$}] (3) [right = 0.8cm of s] {\scriptsize $34$};
	\node[draw, thick, circle,  label={[xshift = 0.4cm, yshift=-0.2cm] $i_{1}$}] (1) [above = 0.2cm of 3] {\scriptsize $30$};
	\node[draw, thick, circle,  label={[xshift = 0.4cm, yshift=-0.2cm] $i_{2}$}] (2) [below = 0.2cm of 3] {\scriptsize $72$};
	\node[draw, thick, circle] (d) [left = 0.8cm of s] {\scriptsize $40$};
	\node[draw, thick, circle] (dd) [above = 0.1cm of d] {\scriptsize $40$};
	\node[draw, thick, circle] (ddd) [below = 0.1cm of d] {\scriptsize $40$};
	\node[draw, thick, circle,  label={[xshift = 0.4cm, yshift=-0.2cm] $i_{4}$}] (4) [right = 0.8cm of 2] {\scriptsize $45$};
	\node[draw, thick, circle,  label={[xshift = 0.4cm, yshift=-0.2cm] $i_{5}$}] (5) [right = 0.8cm of 3] {\scriptsize $50$};
	\node[draw, thick, circle,  label={[xshift = 0.4cm, yshift=-0.2cm] $i_{6}$}] (6) [right = 0.8cm of 4] {\scriptsize $66$};
	\node[draw, thick, circle,  label={[xshift = 0.4cm, yshift=-0.2cm] $i_{7}$}] (7) [right = 0.8cm of 5] {\scriptsize $40$};

	\draw[draw=black,thick, dotted] (-2.1,-1.3) rectangle (-0.9,1.3);
	\node[above] at (-1.5, 1.25) {\footnotesize dummy buyers};

	\path[draw,thick]
	(s) edge[->]  (1)
	(s) edge[->]  (2)
	(s) edge[->]  (3)
	(s) edge[->]  (d)
	(s) edge[->]  (dd)
	(s) edge[->]  (ddd)
	(3) edge[->]  (5)
	(5) edge[->]  (7)
	(5) edge[->]  (6)
	(2) edge[->]  (4);
\end{tikzpicture}
\end{center}
\caption{Implementation of Reserve Price $v_h (= 40)$
 by Adding $k (= 3)$ Dummy Buyers to Diffusion Critical Tree.}
\label{fig:dummy}
\end{figure}
The following example, which uses the same profile
of the reports with Example~\ref{ex:ex},
demonstrates how the introduction of a reserve price changes the allocation.

\begin{example}
 See Fig.~\ref{fig:dummy}.
 Since there are three units, the mechanism first adds three dummy vertices.
 The price for $i_{1}$ is given as
 $p_{i_1} = 50$,
 and she is not allocated a unit.
 The price for $i_{2}$ is given as
 $p_{i_2} = 45$,
 and she wins a unit.
 The price for $i_{3}$ is given as
 $p_{i_3} = 40$,
 which comes from the valuation of the dummy buyer.
 Since her value is strictly less than $p_{i_{3}}$,
 she is not allocated a unit.
 The price for $i_{4}$ is given as
 $p_{i_4} = 50$,
 and she is not allocated a unit.
 The price for $i_{5}$ is given as
 $p_{i_5} = 40$,
 and she wins a unit. At this moment one unit remains.
 For buyer $i_6$, the price is given as
 $p_{i_6} = 45$,
 which is strictly less than her value of $66$.
 Thus, she wins a unit.
 Now that no unit remains, the price for buyer $i_7$ is set to be infinity,
 To sum up, $i_2$, $i_5$, and $i_6$ win a unit,
 and each pays $40$, $40$, and $45$, respectively.
\end{example}

Nearly identical proofs work for feasibility, non-deficit,
individual rationality, and strategy-proofness.
However, the introduction of a reserve price obviously
breaks down non-wastefulness. Actually, for any non-zero $v_h$,
there is a case where no buyer wins a unit, e.g.,
$v_{i} < v_h$ for every $i \in N$.
This implies that, when we consider the approximation ratio
an efficiency measure, the distance-based mechanism with
reserve price performs poorly.
Even worse, the original definition without a reserve price
still has an arbitrarily worse (i.e., arbitrarily close to zero) approximation ratio.

Nevertheless, it remains important to
clarify the effect of different reserve prices,
given the practical usefulness of reserve prices.
We therefore consider the following worst case efficiency measure
called $\alpha$-inefficiency,
inspired by Nath and Sandholm (\citeyear{nath:GEB:2018}),
and find that the optimal reserve price is $\bar{v}/2$,
where $\bar{v}$ is the upper bound of the value,
i.e., for each $i \in N$, $v_i \leq \bar{v}$.

\begin{definition}
 Let $\bar{v}$ be the upper bound of the value.
 A mechanism $(f,t)$ is {\em $\alpha$-inefficient} if
 \[
 \alpha =
 \frac{1}{k \bar{v}}
 \sup_{\boldsymbol{\theta}' \in \boldsymbol{\Theta}}
 \left[
 \max_{\boldsymbol{x} \in \black{\boldsymbol{X}}} \sum_{i \in N} v'_i \cdot x_i
 - \sum_{i \in N} v'_i \cdot f_i(\boldsymbol{\theta}')
 \right].
\]
\end{definition}

The range of $\alpha$ is $[0,1]$,
and having a smaller $\alpha$ is better.
We first provide a lemma that is useful to provide
the worst-case inefficiency,
while its proof appears in the appendix.
Given $\boldsymbol{\theta'}$,
let $\ell$ denote the number of connected buyers
whose values are
no less than $v_h$,
i.e., $\ell := \# \{i \in \hat{N} \mid v_{i} \geq v_h\}$.

\begin{lemma}
\label{lem:nwf}
 Assuming all buyers declare their true values,
 $\min(\ell, k)$ units are allocated in the distance-based mechanism
 with a reserve price.
\end{lemma}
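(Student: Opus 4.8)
The plan is to establish the exact count $\min(\ell,k)$ by proving two matching inequalities. Throughout I assume truth-telling, so reported values coincide with the $v_i$ and the network is the true one, and I write $\hat{W}_{\succ i}$ for the set of winners selected strictly before $i$ in the priority order, so that the price faced by $i$ is $p_i = v^*(\hat{N}_{-i}\setminus \hat{W}_{\succ i},\, k-|\hat{W}_{\succ i}|)$, where the $k$ dummy vertices, each of value $v_h$, are included in $\hat{N}_{-i}$.

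First I would prove that \emph{at most} $\min(\ell,k)$ units are sold. Feasibility already gives the bound $k$. For the bound $\ell$, the key observation is that every dummy lies in $\hat{N}_{-i}\setminus \hat{W}_{\succ i}$ for each processed $i$: a dummy is attached directly to $s$, so it is never a descendant of $i$, and a dummy can never enter $\hat{W}_{\succ i}$. Hence $\hat{N}_{-i}\setminus\hat{W}_{\succ i}$ always contains at least $k$ elements of value $v_h$, so its $(k-|\hat{W}_{\succ i}|)$-th highest value is at least $v_h$, i.e.\ $p_i \geq v_h$. Any winner $i$ therefore satisfies $v_i \geq p_i \geq v_h$, so all winners belong to $B := \{i \in \hat{N} \mid v_i \geq v_h\}$, giving at most $\ell = |B|$ winners.

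The main work is the reverse bound: \emph{at least} $\min(\ell,k)$ units are sold. I would argue by contradiction, assuming the number $m$ of winners satisfies $m < \min(\ell,k)$, so that $m < k$ (a unit remains unsold at the end) and $m < \ell$ (some buyer of $B$ loses). Among all losing buyers in $B$, choose $j$ of the largest value. When $j$ is processed, $|\hat{W}_{\succ j}| \leq m < k$, and since $j$ loses, $v_j < p_j = v^*(\hat{N}_{-j}\setminus\hat{W}_{\succ j},\, k-|\hat{W}_{\succ j}|)$; in particular $p_j > v_j \geq v_h \geq 0$, so $p_j > 0$, which makes this a genuine $(k-|\hat{W}_{\succ j}|)$-th highest value and forces at least $k-|\hat{W}_{\succ j}|$ buyers of $\hat{N}_{-j}\setminus\hat{W}_{\succ j}$ to have value strictly greater than $v_j$. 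Call this set $A$. The crucial points to verify are that $A$ consists entirely of real buyers in $B$ that are winners processed after $j$: no dummy lies in $A$ since dummies have value $v_h \leq v_j$; each member of $A$ lies in $B$ since its value exceeds $v_j \geq v_h$; each is a winner because $j$ was chosen as the maximum-value loser in $B$; and each is processed after $j$, for a winner processed before $j$ would belong to $\hat{W}_{\succ j}$, which is excluded from $A$. Thus there are at least $k-|\hat{W}_{\succ j}|$ winners strictly after $j$, giving $m \geq |\hat{W}_{\succ j}| + (k-|\hat{W}_{\succ j}|) = k$, contradicting $m < k$.

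I expect this last paragraph to be the delicate part: the argument hinges on the interplay between the descendant exclusion in $\hat{N}_{-j}$, the exact value $v_h$ of the dummies, and the extremal choice of $j$, and each of these must be checked so that the high-value buyers counted in the price $p_j$ are simultaneously real, in $B$, winning, and processed later. Once this is secured, combining the two bounds yields exactly $\min(\ell,k)$ allocated units.
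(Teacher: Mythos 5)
Your proof is correct, and it takes a genuinely different route from the paper's. The paper splits into two cases: for $\ell \geq k$ it simply invokes Theorem~2 (non-wastefulness of the base mechanism), and for $\ell < k$ it labels the buyers of $B = \{i \in \hat{N} \mid v_i \geq v_h\}$ in descending order of value and proves by induction that each one wins, using the bookkeeping sets $M''$, $g_1$, $g_2$ to reconcile the value order with the priority order $\succ$ (the same machinery as in the paper's proof of Theorem~2). You instead give a uniform, induction-free counting argument: the observation that all $k$ dummies stay in $\hat{N}_{-i} \setminus \hat{W}_{\succ i}$ forever, forcing $p_i \geq v_h$, yields the upper bound (a fact the paper never proves inside Lemma~3 and only asserts in passing in the proof of Theorem~6, even though the lemma claims an exact count); and your extremal choice of the maximum-value loser $j \in B$ replaces the induction, since $p_j > v_j$ directly produces $k - |\hat{W}_{\succ j}|$ buyers of value exceeding $v_j$ who, by maximality of $j$, must all be winners processed after $j$, giving $m \geq k$ and the contradiction. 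All the delicate checks you flag (no dummy in $A$, membership in $B$, exclusion of $\hat{W}_{\succ j}$, the degenerate $|S| < k'$ case via $p_j > 0$) do go through. What the trade-off buys: the paper's induction establishes the slightly stronger structural fact that the winners are exactly the top-valued buyers, which the proof of Theorem~6 quotes verbatim for $\ell < k$; your version recovers this where it is needed, since for $\ell < k$ your two bounds force the winner set to equal $B$, while for $\ell \geq k$ Theorem~6 only uses the count. In exchange, your argument is shorter, handles both cases at once, and makes explicit the price-floor role of the dummies that the paper leaves implicit.
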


Given the above lemma, we show that
the distance-based mechanism with $v_h = \bar{v}/2$ satisfies $1/2$-inefficiency.

\begin{theorem}
\label{thm:EFF}
 The distance-based mechanism with reserve price $v_h$ satisfies $1/2$-inefficiency
 by setting $v_h = \bar{v}/2$.
\end{theorem}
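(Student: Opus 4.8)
The plan is to bound the efficiency loss $\sigma^{*}-\sigma$ by $k\bar{v}/2$ for \emph{every} profile, where $\sigma := \sum_{i\in N} v'_i f_i(\boldsymbol{\theta}')$ is the social surplus realized by the mechanism and $\sigma^{*} := \max_{\boldsymbol{x}\in\boldsymbol{X}}\sum_{i\in N} v'_i x_i$ is the optimal one; dividing by $k\bar{v}$ then yields $\alpha\le 1/2$. Two facts drive the argument. First, by Lemma~\ref{lem:nwf} the mechanism allocates exactly $\min(\ell,k)$ units. Second, because the implementation inserts $k$ dummy buyers of value $v_h$ attached directly to $s$, each of which lies in $\hat{N}_{-i}$ for every real buyer $i$ (a dummy is neither $i$ nor a descendant of $i$) and never wins, the price $v^{*}((\hat{N}_{-i}\cup D)\setminus W,\,k-|W|)$ faced by any buyer is the $(k-|W|)$-th highest value of a set containing these $k$ dummies, hence at least $v_h$. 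Consequently every winner has value at least $v_h$, so $\sigma\ge \min(\ell,k)\cdot v_h$; I also use that $\sigma^{*}$ equals the sum of the top $\min(k,|\hat{N}|)$ connected values.

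The core is a short case analysis on $\ell$. When $\ell\ge k$, both the mechanism and the optimum allocate exactly $k$ units (since $|\hat{N}|\ge\ell\ge k$), and every value involved on either side lies in $[v_h,\bar{v}]$ (the optimum's $k$-th largest value is at least $v_h$ because $\ell\ge k$); therefore $\sigma^{*}-\sigma \le k\bar{v}-kv_h = k(\bar{v}-v_h)$. When $\ell<k$, the mechanism allocates $\ell$ units, each to a distinct buyer of value at least $v_h$; since there are exactly $\ell$ connected buyers meeting the reserve, the winners are precisely those $\ell$ buyers, so $\sigma$ is the sum of the top $\ell$ connected values. The optimum additionally allocates the values ranked $\ell+1,\dots,\min(k,|\hat{N}|)$, each strictly below $v_h$ by the definition of $\ell$; hence $\sigma^{*}-\sigma < (k-\ell)v_h \le kv_h$.

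Combining the two cases gives $\sigma^{*}-\sigma \le \max\{\,k(\bar{v}-v_h),\,kv_h\,\}$ for every profile, so the supremum defining $\alpha$ is at most this quantity. Setting $v_h=\bar{v}/2$ equalizes the two bounds at $k\bar{v}/2$, whence $\alpha \le \frac{1}{k\bar{v}}\cdot\frac{k\bar{v}}{2}=\frac12$. To see that this is tight (so the value is exactly $1/2$, and that $\bar{v}/2$ is the minimizing reserve) I would exhibit, for the lower bound, a profile with $|\hat{N}|\ge k$ in which every buyer has value just below $\bar{v}/2$: then $\ell=0$, the mechanism sells nothing while the optimum collects a surplus approaching $k\bar{v}/2$, driving the loss to $k\bar{v}/2$. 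Viewed as functions of $v_h$, the two bounds are respectively decreasing and increasing and cross only at $\bar{v}/2$, which explains why that reserve is optimal.

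The step I expect to be most delicate is the dummy-price argument, i.e.\ verifying rigorously that the $k$ dummies force every winner's value to be at least $v_h$ and, in the $\ell<k$ case, that the winner set coincides exactly with the connected buyers meeting the reserve; this is precisely what lets me avoid reasoning about which particular buyers the diffusion-critical-tree structure selects and reduce everything to counting arguments. Boundary ties at value exactly $v_h$ must be handled consistently with the definition of $\ell$ and with $v^{*}(\cdot,\cdot)$, but the remaining estimates are then routine.
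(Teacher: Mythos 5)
Your proposal is correct and takes essentially the same route as the paper's proof: the same case split on $\ell \geq k$ versus $\ell < k$, the same reliance on Lemma~\ref{lem:nwf} to get $\min(\ell,k)$ allocated units, the same per-case bounds $k(\bar{v}-v_h)$ and $kv_h$ whose maximum is minimized at $v_h=\bar{v}/2$, and the same $\ell=0$ family of instances (all values just below $v_h$) witnessing tightness. Your explicit dummy-buyer argument that every price, and hence every winner's value, is at least $v_h$ simply makes rigorous a step the paper leaves implicit.
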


\begin{proof}
 If $\ell < k$, the distance-based mechanism allocates units to the top
 $\ell$ buyers within $\hat{N}$ (in terms of values) from Lemma~\ref{lem:nwf}.
 The remaining $k - \ell$ units cannot be allocated
 since the values of other buyers are less than $v_h$.
 Thus, the maximum efficiency loss is bounded by $(k-\ell) v_h$
 (if $k-\ell$ buyers exist whose values are $v_h - \epsilon$,
 the efficiency loss becomes $(k-\ell) (v_h - \epsilon)$).
 In particular, if the value of each buyer is less than
 $v_h$, $\ell$ becomes $0$.
 Thus, the worst case efficiency loss is $k\cdot v_h$.
 If $\ell \geq k$,
 the distance-based mechanism allocates $k$ units from Lemma~\ref{lem:nwf}.
 The maximum efficiency loss is bounded by $k (\bar{v} - v_h)$,
 which can occur, for instance, when there are $2k$ buyers,
 forming a path graph, and those $k$ buyers closer to $s$ have the value
 of $v_h$, while the rest have the value of $\bar{v}$: the
 distance-based mechanism allocates $k$ units to the closest $k$ buyers.
 From the above, the maximum efficiency loss is given as
 $\max(k\cdot v_h, k(\bar{v}-v_h))$.
 This is bounded from the bottom by $k \cdot \bar{v}/2$,
 which is achieved by setting $v_h$ to $\bar{v}/2$.
 Thus, the distance-based mechanism is
 $1/2$-inefficient for $v_h = \bar{v}/2$.
\end{proof}

Observe that there is a tradeoff between achieving non-wastefulness
and guaranteeing a better worst-case performance
by the mechanism with a reserve price,
where the former is achieved by $v_h = 0$
and the latter by $v_h = \bar{v}/2$.
Obtaining the lower bound of $\alpha$
that a strategy-proof mechanism achieves
remains an open question.
However, since $0$-inefficiency implies Pareto efficiency,
the impossibility suggested by Takanashi et al.\
(\citeyear{takanashi:arXiv:2019}) implies that no strategy-proof mechanism that also satisfies non-deficit and
individual rationality achieves $0$-inefficiency.

\section{Revenue Analysis}
\label{sec:revenue}

The seller's revenue is also an important evaluation criterion 
for auction mechanisms.
In this section, we first show that
the seller's revenue in the distance-based mechanism
is no less than those of the two na\"{i}ve ones.
We also show that
maximizing the revenue by optimally choosing the set of
its followers to whom it sends the information is NP-complete.

\subsection{Revenue Domination}

We define the domination in terms of the seller's revenue
analogously.
A mechanism $(f, p)$ {\em dominates} another mechanism $(f', p')$
{\em in terms of the seller's revenue}
if for any $N$ and any $\boldsymbol{\theta}'$, it holds that
$
 \sum_{i \in N} t_{i}(\boldsymbol{\theta}')
 \geq
 \sum_{i \in N} t'_{i}(\boldsymbol{\theta}')
$.

\begin{proposition}
 The distance-based mechanism dominates both ND-VCG and FCFS-F
 in terms of the seller's revenue, but not vice versa.
\end{proposition}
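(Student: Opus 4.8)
The plan is to treat the two na\"{i}ve mechanisms separately, reducing the FCFS-F case to non-deficit and the ND-VCG case to a per-winner price lower bound. The FCFS-F comparison is immediate: FCFS-F gives every unit away for free, so its revenue $\sum_{i\in N} t'_i(\boldsymbol{\theta}')$ is identically $0$. Since the distance-based mechanism satisfies non-deficit (Theorem~\ref{thm:FIRND}), its revenue is non-negative, and it weakly dominates FCFS-F. For the ``not vice versa'' direction, Example~\ref{ex:ex} already exhibits strictly positive revenue $40+30+45=115$, so FCFS-F does not dominate.

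For ND-VCG, I would first dispose of the easy case $|r_s|\le k$: then every direct buyer wins in ND-VCG with no competitor to promote, so all VCG payments, and hence the revenue, equal $0$, and non-deficit again closes the case. The substantive case is $|r_s|>k$. Here I use the standard fact that in unit-demand VCG restricted to $r_s$ each of the $k$ winners pays the $(k+1)$-th highest direct-buyer value $m := v^{*}(r_s, k+1)$, so the ND-VCG revenue equals $k\cdot m$. It then suffices to show that in the distance-based mechanism every winner pays at least $m$.

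The key step is this price lower bound. Since $|\hat{N}|\ge |r_s|>k$, non-wastefulness (Theorem~\ref{thm:NW}) together with feasibility forces exactly $k$ winners. Fix a winner $i$ and let $W$ be the winners selected before her; her payment is $p_i = v^{*}(\hat{N}_{-i}\setminus W,\, k-|W|)$. The crucial structural observation is that a direct buyer has no critical parent, so each direct buyer is a child of $s$ in $T(\boldsymbol{\theta}')$ and is therefore never a descendant of any buyer; consequently $r_s\setminus\{i\}\subseteq \hat{N}_{-i}$. Because at least $k+1$ direct buyers have value $\ge m$, at least $k$ of them lie in $\hat{N}_{-i}$, and removing the $|W|$ earlier winners still leaves at least $k-|W|$ of them in $\hat{N}_{-i}\setminus W$. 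Hence the $(k-|W|)$-th highest value there is at least $m$, i.e.\ $p_i\ge m$. Summing over the $k$ winners gives distance-based revenue $\ge k\cdot m$, matching ND-VCG. Strictness follows again from Example~\ref{ex:ex}, where $r_s=\{i_1,i_2\}$ yields ND-VCG revenue $0<115$.

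The main obstacle is establishing the per-winner bound $p_i\ge m$, and specifically verifying that enough high-value direct buyers remain available as competitors in $\hat{N}_{-i}\setminus W$. This is exactly where I rely on direct buyers sitting at depth one of the diffusion critical tree, so that they are never stripped out of $\hat{N}_{-i}$ as descendants of $i$, combined with a counting argument bounding how many of these $\ge m$ competitors the earlier winners in $W$ can consume. The rest of the argument is routine case analysis and an appeal to non-deficit.
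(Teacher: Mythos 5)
Your proof is correct and takes essentially the same route as the paper: the same case split on $|r_s| \leq k$ versus $|r_s| > k$, the same reduction of the FCFS-F comparison to non-deficit, and the same key structural fact that direct buyers have no critical parents and hence $r_s \setminus \{i\} \subseteq \hat{N}_{-i}$ for every winner $i$. Your explicit counting of the $k+1$ direct buyers with value at least $m = v^{*}(r_s, k+1)$ surviving in $\hat{N}_{-i} \setminus W$ is just an unpacked (and, to your credit, more carefully justified) version of the paper's one-line bound $p_i \geq v^{*}(\hat{N}_{-i}, k) \geq v^{*}(r_s \setminus \{i\}, k)$.
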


\begin{proof}
 The distance-based mechanism obviously dominates ND-VCG
 when $|r_{s}| \leq k$, since the price for each winner in ND-VCG
 is zero. When $|r_{s}| > k$, each winner in ND-VCG pays $v^{*}(r_{s} \setminus \{i\}, k)$.
 On the other hand, the price $p_{i}$ for each winner $i$ in the distance-based mechanism
 satisfies $p_{i} \geq v^{*}(\hat{N}_{-i}, k)$ by definition.
 For every winner $i$,
 $\hat{N}_{-i}$ is a superset of $r_{s} \setminus \{i\}$.
Therefore, from the monotonicity of $v^{*}$ on the first argument,
 $p_{i} \geq v^{*}(\hat{N}_{-i}, k) \geq v^{*}(r_{s} \setminus \{i\},
 k)$ holds. Also, there exists a case where the inequality becomes strict.
Thus, the distance-based mechanism dominates ND-VCG but not vice versa.
Since the revenue of FCFS-F is always zero, while
the revenue of the distance-based mechanism is non-negative for any
input and can be strictly positive,
 the distance-based mechanism also dominates FCFS-F but not vice versa.
\end{proof}

\subsection{Revenue Monotonicity}

The seller's revenue is required to have some specific form of monotonicity.
Several forms of such {\em revenue monotonicity} have been studied, 
including bidder revenue monotonicity~\cite{rastegari:AIJ:2011,todo:IAT:2010} and
item revenue monotonicity, a.k.a.\ destruction-proofness~\cite{muto:MSS:2017}.

The 
condition studied in this section
is weaker than bidder revenue monotonicity.
A mechanism is {\em follower revenue monotonic}
if the seller's revenue is monotonically increasing
with respect to the number of direct buyers.
Let $t_{i}(\boldsymbol{\theta}' \mid r'_{s})$ be the payment from buyer $i$
when $\boldsymbol{\theta}'$ is reported and
$s$ sends the information of the auction
to a subset $r'_{s}$ of direct buyers.

\begin{definition}
 A mechanism $(f,t)$ is follower revenue monotonic if
 for any $r_{s}$,
 $\boldsymbol{\theta}'$,
 and
 $r'_{s} \subseteq r_{s}$,
 it holds that
 $\sum_{i \in N} t_{i}(\boldsymbol{\theta}' \mid r_{s})
 \geq
 \sum_{i \in N} t_{i}(\boldsymbol{\theta}' \mid r'_{s})$.
\end{definition}

\begin{figure}[t]
 \begin{center}
  \begin{tikzpicture}[scale=0.75, transform shape]
   \node[draw, thick, circle] (s) {$s$};
   \node[draw, thick, circle, label={[xshift = 0.45cm, yshift=-0.3cm] $i_{2}$}] (2) [right = 1cm of s] {\scriptsize $20$};
   \node[draw, thick, circle, inner sep=4.5pt, label={[xshift = 0.45cm, yshift=-0.3cm] $i_{1}$}] (1) [above = 0.1cm of 2] {\scriptsize $5$};
   \node[draw, thick, circle, inner sep=4.5pt, label={[xshift = 0.45cm, yshift=-0.3cm] $i_{3}$}] (3) [below = 0.1cm of 2] {\scriptsize $6$};
   \node[draw, thick, circle, label={[xshift = 0.45cm, yshift=-0.3cm] $i_{4}$}] (4) [right = 1cm of 1] {\scriptsize $15$};

   \path[draw,thick]
   (s) edge[->]  (1)
   (s) edge[->]  (2)
   (s) edge[->]  (3)
   (1) edge[->]  (4);
  \end{tikzpicture}
 \end{center}
\caption{Violation of Revenue Monotonicity.}
\label{fig:rev-mon}
\end{figure}
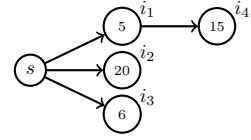
For a multi-unit auction without any network among buyers,
there are several bidder revenue monotonic mechanisms,
and thus follower revenue monotonic ones.
However, the example below shows that,
for auction via social network,
our mechanism is not even follower revenue monotonic.

\begin{example}
Consider two units
and four buyers $i_{1}$, $i_{2}$, $i_{3}$, and $i_{4}$;
see Fig.~\ref{fig:rev-mon}.
The priority order is $i_{1} \succ i_{2} \succ i_{3} \succ i_{4}$.

Assume all of the buyers behave sincerely.
When the seller $s$ sends the information to all of its followers, i.e.,
$i_{1}$, $i_{2}$, and $i_{3}$, all buyers become connected.
The price for each buyer is given as
$p_{1}=6$, $p_{2}=6$, $p_{3}=15$, and $p_{4}=6$,
where $i_2$ and $i_4$ win a unit and the revenue is $12$.
When $s$ sends the information only to $i_{1}$ and $i_{2}$,
$i_{3}$ becomes unconnected.
The price for each buyer is given as
$p_{1}=0$, $p_{2}=15$, and $p_{4}=\infty$,
where $i_{1}$ and $i_{2}$ win a unit
and the revenue is $15$.
As a result, the revenue is not maximized
when the seller sends the information to all the direct buyers.
\end{example}

Therefore, 
a question rises to the seller: to which set of direct buyers
should she send the information to maximize her revenue?
We define a simplified form of this problem,
so called {\sc Optimal Diffusion},
from the perspective of computational complexity:
Assuming that the seller knows the exact network
among buyers and that all buyers behave sincerely,
she tries to find an optimal set of direct buyers
to whom she should send the information.

\begin{definition}[\sc Optimal Diffusion]
 Given a number $k$ of units,
 a profile $\boldsymbol{\theta}'$,
 a set $r_{s}$ of direct buyers,
 and a threshold $K$,
 is there a subset $r'_s \subseteq r_s$ s.t.\
 $\sum_{i \in N} t_{i}(\boldsymbol{\theta}' \mid r'_{s}) \geq K$
 holds under the distance-based mechanism?
\end{definition}

We show that {\sc Optimal Diffusion} is NP-complete
by a reduction from {\sc Partition}.
Due to space limitations, we present a proof sketch below; 
a full proof is in the appendix. 

\begin{definition}[\sc Partition]
 Given a set $\boldsymbol{A}$ where each $i\in \boldsymbol{A}$ has a value $v(i) \in \mathcal{Z}^+$,
 does there exist a subset $\boldsymbol{A'} \subseteq \boldsymbol{A}$ such that
 $\sum_{i\in \boldsymbol{A}'} v(i) = m$, where $m = \sum_{i \in \boldsymbol{A}}
 v(i)/2$?
\end{definition}

\begin{theorem}
 {\sc Optimal Diffusion} is NP-complete.
\end{theorem}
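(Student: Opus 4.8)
The plan is to prove both membership in NP and NP-hardness, with the hardness coming from a reduction from \textsc{Partition}.

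Membership is the easy direction. First I would observe that, for a fixed $r'_s \subseteq r_s$, the induced connected set $\hat{N}$, the distances $d(\cdot)$, the diffusion critical tree $T(\boldsymbol{\theta}')$, and hence every price $p_i = v^*(\hat{N}_{-i}\setminus \hat{W}_{\succ i},\, k-|\hat{W}_{\succ i}|)$ can all be computed in polynomial time by simply running the algorithm of Definition~\ref{def:proposed-mechanism}. Thus a subset $r'_s$ is a polynomial-size certificate whose revenue $\sum_{i\in N} t_i(\boldsymbol{\theta}'\mid r'_s)$ is verifiable against $K$ in polynomial time, so \textsc{Optimal Diffusion} is in NP.

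For hardness, given a \textsc{Partition} instance $\boldsymbol{A}$ with $\sum_{a\in\boldsymbol{A}} v(a)=2m$, I would build an auction network in which each element $a\in\boldsymbol{A}$ is represented by one direct buyer (the root of a small gadget) carrying a value derived from $v(a)$, so that ``$s$ forwards to this direct buyer'' corresponds exactly to ``$a\in\boldsymbol{A}'$.'' The design goal is to choose the number of units $k$, the remaining (always-connected) buyers, the priority order induced by $d(\cdot)$, and the descendant structure of $T(\boldsymbol{\theta}')$ so that the total payment depends on the chosen subset only through $x:=\sum_{a\in\boldsymbol{A}'} v(a)$ and is a strictly peaked function of $x$ whose unique maximum is attained at the balanced split $x=m$. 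Setting the threshold $K$ equal to that peak value then makes ``$\sum_{i\in N} t_i(\boldsymbol{\theta}'\mid r'_s)\ge K$'' equivalent to ``$x=m$,'' i.e., to the existence of a partition. Concretely, the gadget must exploit the same tension exhibited by the revenue-monotonicity counterexample of Figure~\ref{fig:rev-mon}: connecting an element buyer raises some winners' thresholds while simultaneously suppressing others, so that both under-connecting ($x<m$) and over-connecting ($x>m$) strictly lower the collected revenue.

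The correctness argument then splits into the two standard directions. If a partition $\boldsymbol{A}'$ exists, forwarding to exactly the corresponding direct buyers yields $x=m$ and, by construction, revenue exactly $K$. Conversely, any $r'_s$ achieving revenue at least $K$ must, because $K$ is the strict peak, correspond to $x=m$, from which a valid partition is read off. The main obstacle I anticipate is the gadget engineering together with this converse direction: since every individual price is a single order statistic $v^*(\cdot,\cdot)$ rather than a sum, the revenue is only an aggregate of thresholds, and realizing a genuinely peaked dependence on $x$---and proving that no unbalanced subset can reach $K$---requires carefully controlling how each connected element buyer shifts the relevant $v^*$ values, the winner set $\hat{W}$, and the descendant-based exclusions $\hat{N}_{-i}$ throughout the run of the mechanism. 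Verifying that the reduction is polynomial and that all constructed values remain polynomially bounded then completes the proof.
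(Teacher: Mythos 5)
Your NP-membership argument and your overall architecture coincide with the paper's: a reduction from \textsc{Partition} with one gadget per element hanging off a direct buyer, revenue depending on the choice of $r'_s$ only through the aggregate $x=\sum_{a\in\boldsymbol{A}'}v(a)$, uniquely peaked at $x=m$, with the threshold $K$ set at the peak. But your proposal stops exactly where the proof lives: the gadget is never constructed, and you yourself flag the gadget engineering and the converse direction as unresolved obstacles. Moreover, the one concrete design choice you do commit to --- one direct buyer per element ``carrying a value derived from $v(a)$'' --- points the wrong way for this mechanism. As you correctly observe, every price is an order statistic $v^*(\cdot,\cdot)$, so a single buyer whose \emph{value} encodes $v(a)$ displaces at most one position in any ranking and cannot make the revenue respond additively to the total weight of the chosen subset. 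The paper's key idea, which resolves precisely the tension you name, is a \emph{unary} encoding: the gadget root $a^i_0$ has negligible value $\epsilon$ (it never wins) and exactly $v(i)$ followers all with the \emph{same} value $v_1$, so that choosing $\boldsymbol{A}'$ injects exactly $x$ indistinguishable $v_1$-buyers into $\hat{N}$, and $x$ enters every subsequent price as a count that shifts order statistics.

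The second missing ingredient is the apparatus converting that count into a peaked revenue. The paper appends two chains: $N^B$ with $b_1$ (value $v_2$), $b_2,\dots,b_{m+1}$ (value $v_3$), $b_{m+2}$ (value $v_4$), and $N^C$ with $m$ buyers of value $\epsilon$ followed by $c_{m+1}$ (value $v_5$), where $\epsilon<v_2<v_1<v_3\ll v_4<v_5$, $k=m+2$, and $K=\epsilon+m\cdot v_1+v_4$. If $x=m$, then $b_1$ wins at price $\epsilon$, each of $b_2,\dots,b_{m+1}$ wins at price $v_1$, and exactly one unit survives until $c_{m+1}$, who pays $v_4$, hitting $K$ exactly; if $x>m$, buyer $b_1$ loses and all winners pay only $v_1$; if $x<m$, the high price $v_4$ is never charged because $c_{m+1}$'s price collapses to $v_1$ or $\epsilon$. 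Verifying these three case analyses by simulating the mechanism's run is the entire content of the converse direction you deferred. One further caveat your closing sentence gestures at but does not settle: the unary gadgets require $\sum_{i}v(i)$ buyers, so the reduction is polynomial only relative to the unary size of the \textsc{Partition} instance --- a convention the paper also adopts tacitly. As it stands, your proposal is a sound plan matching the paper's strategy, but the central construction that makes it a proof is missing.
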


\begin{proof}[\it Proof Sketch]
 First, {\sc Optimal Diffusion} is in NP
 since we can compute $\sum_{i \in N} t_i(\boldsymbol{\theta}'\! \mid \! r'_s)$ in polynomial time.
 Given an instance of {\sc Partition}, we construct an instance of
 {\sc Optimal Diffusion} as follows,
 with $N = N^A \cup N^B \cup N^C$:
 \begin{itemize}
  \item For all $i \!\! \in \!\! \boldsymbol{A}$, we create set $(a^i_j)_{0 \leq j \leq v(i)}$ in $N^A$ such that
	  ${\theta_{a^i_0} \! = \! (\epsilon, (a^i_j)_{1 \leq j \leq v(i)})}$, and
        $\theta_{a^i_j} \! = \! (v_1, \emptyset)$ for $1 \! \leq \! j \! \leq \! v(i)$.
  \item Set $N^B = (b_j)_{1 \leq j \leq m+2}$ is such that
        $\theta_{b_1}= (v_2, \{b_2\})$,
        $\theta_{b_j}= (v_3, \{b_{j+1}\})$ for $2 \! \leq \! j \! \leq \! m\!+\!1$, and
        $\theta_{b_{m+2}}= (v_4, \emptyset)$.
  \item Set $N^C = (c_j)_{1 \leq j \leq m+1}$ is such that
	  $\theta_{c_j}= (\epsilon, \{c_{j+1}\})$ for $1\! \leq \! j \!\leq \!m$, and
        $\theta_{c_{m+1}}= (v_5, \emptyset)$.
\item The seller's direct followers are $(a^i_0)_{i \in A} \cup \{b_1, c_1\}$.
 \end{itemize}
The network is illustrated in Fig. \ref{fig:SketchOptAuction}.
\begin{figure}[t]
\begin{center}
	\begin{tikzpicture}[scale=.9, transform shape]

	\node[draw, thick, circle] (s) {$s$};
	\node[draw, thick, circle,  label=above:{$a^z_0$}] (a2) [right = 0.8cm of s] {\small $\epsilon$};
	\node[draw, thick,  circle,  label=above:{$a^1_0$}] (a1) [above = 1.5cm of a2] {\small $\epsilon$};
	\node[draw, thick, circle,  label={[xshift = 0.4cm, yshift=-0.2cm] $b_1$}] (b1) [below = 1.1cm of a2] {\small $v_2$};
	\node[draw, thick, circle,  label={[xshift = 0.4cm, yshift=-0.2cm] $c_1$}] (c1) [below = 0.3cm of b1] {\small $\epsilon$};
	\node[draw, thick,  circle,  label={[xshift = 0.55cm, yshift=-0.4cm] $a^1_1$}] (a11) [above right = 0.15cm and 1cm of a1] {\small $v_1$};
	\node[draw, thick,  circle,  label={[xshift = 0.8cm, yshift=-0.5cm] $a^1_{v(a^1)}$}] (a12) [below right = 0.15cm and 1cm of a1] {\small $v_1$};
	\node[draw, thick,  circle,  label={[xshift = 0.55cm, yshift=-0.4cm] $a^z_1$}] (a21) [above right = 0.15cm and 1cm of a2] {\small $v_1$};
	\node[draw, thick,  circle,  label={[xshift = 0.8cm, yshift=-0.5cm] $a^z_{v(a^z)}$}] (a22) [below right = 0.15cm and 1cm of a2] {\small $v_1$};
	\node[draw, thick,  circle,  label={[xshift = 0.4cm, yshift=-0.2cm] $b_2$}] (b2) [right = 0.4cm of b1] {\small $v_3$};
	\node[draw, thick,  circle,  label={[xshift = 0.4cm, yshift=-0.2cm] $c_2$}] (c2) [right = 0.4cm of c1] {\small $\epsilon$};
	\node[draw, thick,  circle,  label={[xshift = 0.6cm, yshift=-0.2cm] $b_{m+1}$}] (bm) [right = 1.4cm of b2] {\small $v_3$};
	\node[]  (bminter) [right = 0.4cm of b2] {\footnotesize $\ldots$};
	\node[]  (cminter) [right = 0.25cm of c2] {\footnotesize $\ldots$};
	\node[draw, thick,  circle,  label={[xshift = 0.4cm, yshift=-0.2cm] $c_{m}$}] (cm1) [right = 1.2cm of c2] {\small $\epsilon$};
	\node[draw, thick,  circle,  label={[xshift = 0.6cm, yshift=-0.2cm] $b_{m+2}$}] (bm2) [right = 0.6cm of bm] {\small $v_4$};
	\node[draw, thick,  circle,  label={[xshift = 0.6cm, yshift=-0.2cm] $c_{m+1}$}] (cm2) [right = 0.4cm of cm1] {\small $v_5$};

	\path[draw,thick]
	(s) edge[->]  (a1)
	(s) edge[->]  (a2)
	(s) edge[->]  (b1)
	(s) edge[->]  (c1)
	(a1) edge[->]  (a11)
	(a1) edge[->]  (a12)
	(a2) edge[->]  (a21)
	(a2) edge[->]  (a22)
	(b1) edge[->]  (b2)
	(c1) edge[->]  (c2)
	(cm1) edge[->]  (cm2)
	(b2) edge[->]  (bminter)
	(bminter) edge[->]  (bm)
	(c2) edge[->]  (cminter)
	(cminter) edge[->]  (cm1)
	(bm) edge[->]  (bm2);

	\path
	(a1)-- node[auto=false, pos=0.3]{\footnotesize \vdots}  (a2)
	(a11)-- node[auto=false, pos=0.3]{\footnotesize \vdots}  (a12)
	(a21)-- node[auto=false, pos=0.3]{\footnotesize \vdots}  (a22);
\draw [decorate, decoration={brace,amplitude=6pt,raise=0pt},yshift=0pt]
(4.5,2.8) -- (4.5,1.2) node [black,midway,xshift=1.2cm,yshift=0cm] {\footnotesize $v(a^1)$ buyers};
\draw [decorate, decoration={brace,amplitude=6pt,raise=0pt},yshift=0pt]
(7.2,2.8) -- (7.2,-0.9) node [black,midway,xshift=0.6cm,yshift=0cm] {\footnotesize $N^A$};
\draw [decorate, decoration={brace,amplitude=6pt,raise=0pt},yshift=0pt]
(4.5,0.8) -- (4.5,-0.8) node [black,midway,xshift=1.2cm,yshift=0cm] {\footnotesize $v(a^z)$ buyers};
\draw [decorate, decoration={brace,amplitude=6pt,raise=0pt},yshift=0pt]
(7.2,-1.3) -- (7.2,-2.0) node [black,midway,xshift=0.6cm,yshift=0cm] {\footnotesize $N^B$};
\draw [decorate, decoration={brace,amplitude=6pt,raise=0pt},yshift=0pt]
(7.2,-2.3) -- (7.2,-3.0) node [black,midway,xshift=0.6cm,yshift=0cm] {\footnotesize $N^C$};
\end{tikzpicture}
\end{center}
\caption{Reduction from Partition: Network of Buyers.}
\label{fig:SketchOptAuction}
\end{figure}
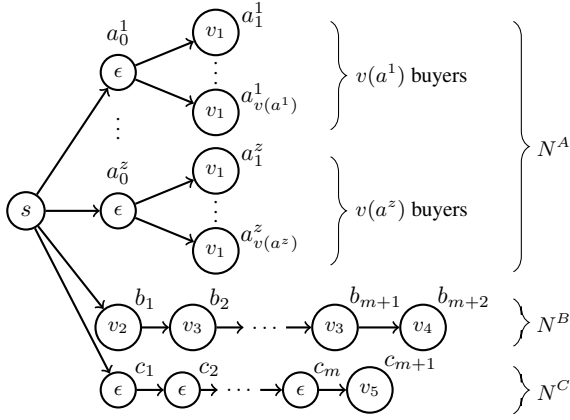
Buyers are labelled with any ascending order of $d(\cdot)$ satisfying
$b_{m+1} \! \succ \! c_{m+1}$.
The prices satisfy
$\epsilon \! <\! v_2\! <\! v_1\! <\! v_3\! <\!<\! v_4\! <\! v_5$.
The number of units is $k\!=\! m+2$ and the threshold is $K\! =\! \epsilon + m\cdot v_1 + v_4$.

We briefly argue the validity of the reduction.
Notice that buyers $b_1$ and $c_1$ belong to any $r'_s \! \subseteq \! r_s $
such that $\sum_{i \in N} t_i(\black{\boldsymbol{\theta}} \mid r'_s) \geq K$,
since otherwise price $v_4$ cannot be reached.

If $|\{a \! \in \! r'_s \mid v_a \!= \!v_1\}| \! = \!m$ holds, i.e.,
exactly $m$ descendants with
value $v_1$ can be chosen (thus the original {\sc Partition} is ``yes''),
then buyer $b_1$ buys at price $\epsilon$,
buyers $(b_i)_{2 \leq i \leq m+1}$ at price $v_1$, and buyer $c_{m+1}$
at price $v_4$.
Hence, $\sum_{i \in N} t_i(\black{\boldsymbol{\theta}} \mid r'_s) = \epsilon +
m\cdot v_1 + v_4=K$ and {\sc Optimal Diffusion} is ``yes''.

If the original {\sc PARTITION} is ``no'', either
(i) $| {a \in r'_s \mid v_a = v_1}|< m$ or
(ii) $| {a \in r'_s \mid v_a = v_1}|> m$  holds.
In the case (i), buyers $b_1$ and $b_2$ buy at price $\epsilon$.
Hence, $\sum_{i \in N} t_i(\black{\boldsymbol{\theta}} \mid r'_s) < K$ and
 {\sc Optimal Diffusion} is ``no''.

 In the case (ii), buyer $b_1$ does not buy, and $c_{m+1}$ buys at
 price lower than $v_4$.
Hence, $\sum_{i \in N} t_i(\black{\boldsymbol{\theta}} \mid r'_s) < K$ and
 {\sc Optimal Diffusion} is ``no''.
\end{proof}

\section{Incentive Analysis}
\label{sec:incentive}

Now we show that, compared with those two na\"{i}ve mechanisms,
our mechanism also has its own strength on buyers incentive;
in those mechanisms, hiding the information, combined with the report of the true value,
is also a dominant strategy,
while this is not the case in our mechanism for any $k \geq 2$.
This indicates that the incentive for each buyer
to report her type truthfully in the distance-based mechanism is stronger than
that in both of those na\"{i}ve ones.

\begin{proposition}
 Assume $k \geq 2$.
 For each $i$, reporting $(v_{i}, \emptyset)$
 is not a dominant strategy in the distance-based mechanism.
\end{proposition}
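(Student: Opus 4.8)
The plan is to exhibit, for every $k \geq 2$, a concrete type $\theta_i = (v_i, r_i)$ with $r_i \neq \emptyset$ and a profile $\boldsymbol{\theta}'_{-i}$ of the other buyers under which reporting $(v_i, \emptyset)$ yields strictly lower utility than forwarding the information to $r_i$. Since strategy-proofness (Theorem~\ref{thm:SP}) already guarantees that truth-telling is \emph{a} dominant strategy, it suffices to break the claim that the non-forwarding report $(v_i, \emptyset)$ is \emph{also} dominant; I only need one instance where forwarding strictly helps buyer $i$ herself.

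First I would construct the example so that one of $i$'s followers, say $j \in r_i$, becomes a new competitor when $i$ forwards, but crucially $j$ ends up with higher priority than $i$ (smaller $d(\cdot)$ after tie-breaking is not the issue — rather $j$ must win ahead of $i$) and $j$'s presence \emph{removes} a higher-valued rival from $i$'s own pricing set $\hat{N}_{-i} \setminus \hat{W}_{\succ i}$. The mechanism is: $p_i = v^{*}(\hat{N}_{-i} \setminus \hat{W}_{\succ i}, k - |\hat{W}_{\succ i}|)$, so if forwarding causes a high-value buyer $j$ (who would otherwise have counted against $i$ in $\hat{N}_{-i}$, or would have been a winner counted in $\hat{W}_{\succ i}$ with no effect) to instead consume a competing slot in a way that lowers $i$'s threshold, then $i$'s price $p_i$ strictly decreases. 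The key design point is that when $i$ does \emph{not} forward, $j$ is disconnected entirely, so $j \notin \hat{N}$ and $j$ neither competes nor shields $i$; when $i$ \emph{does} forward, $i$ becomes $j$'s critical parent, so $j$ is excluded from $\hat{N}_{-i}$ (it is $i$'s descendant), yet $j$'s participation can still knock out a rival $j'$ with $j' \succ i$ who would otherwise have raised $i$'s price.

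Concretely, I would use $k = 2$ (the smallest relevant case) and arrange three competing buyers plus $i$ and her follower $j$: tune values so that forwarding makes $j$ win a unit ahead of some rival $j'$, thereby reducing the effective competition $i$ faces and dropping $p_i$ below $v_i$, whereas without forwarding $i$ faces a price exceeding $v_i$ and loses. The utilities then satisfy $v_i \cdot f_i(\theta_i, \boldsymbol{\theta}'_{-i}) - t_i(\theta_i, \boldsymbol{\theta}'_{-i}) > 0 = v_i \cdot f_i((v_i,\emptyset), \boldsymbol{\theta}'_{-i}) - t_i((v_i,\emptyset), \boldsymbol{\theta}'_{-i})$. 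For general $k \geq 2$ I would pad the construction with $k - 2$ dummy high-value buyers directly attached to $s$ that always occupy the top slots in both scenarios, reducing the effective problem to the $k=2$ core; this shows the phenomenon persists for all $k \geq 2$.

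The main obstacle will be calibrating the values and the tie-breaking priority order so that forwarding genuinely changes \emph{which} buyer wins in the relevant slot — it is easy to build an example where $j$'s arrival only adds competition (raising or leaving $i$'s price unchanged, exactly the benign cases analyzed in Lemma~\ref{lem:r_i} for value-preserving deviations) rather than clearing a rival out of $i$'s way. I must ensure the removed rival $j'$ satisfies $j' \succ i$ and $v'_{j'} > v_i$ in the non-forwarding world (so that $j'$ prices $i$ out), while in the forwarding world $j'$ is displaced into losing by $j$ winning a unit ahead of him, so that $p_i$ falls to a value at most $v_i$. Verifying the exact sequence of price computations in both runs of the algorithm is the delicate bookkeeping step, but it is routine once the network and values are fixed.
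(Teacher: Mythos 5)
Your overall route is the same as the paper's: exhibit one concrete profile on which sincere forwarding strictly beats $(v_i,\emptyset)$, and pad with $k-2$ high-value buyers attached directly to $s$ so the construction covers every $k \geq 2$ (the paper uses $i_5,\dots,i_{k+2}$ with value $30$ for exactly this). But the engine you commit to cannot run. You want the forwarded follower $j$ to ``win a unit ahead of'' a rival $j'$ with $j' \succ i$. Since $j$ is disconnected whenever $i$ hides, every path to $j$ passes through $i$, so $d(j) \geq d(i)+1 > d(i) \geq d(j')$; hence both $j'$ and $i$ are processed before $j$ ever gets a turn, and whether $j$ wins or loses cannot affect $j'$ at all. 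The only channel a descendant has is the one you mention only in passing: $j$'s \emph{value} enters the pricing pool $\hat{N}_{-j'} \setminus \hat{W}_{\succ j'}$ of a buyer processed before $i$, and can raise that buyer's price enough to make her lose.

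Your calibration target is also self-defeating. You insist the displaced rival satisfy $v'_{j'} > v_i$ so that $i$ flips from loser to winner. In the one-step accounting your blueprint describes (only $j'$'s fate changes), turning $j'$ into a loser inserts $v'_{j'}$ into $i$'s pool $\hat{N}_{-i}\setminus\hat{W}_{\succ i}$ while $k - |\hat{W}_{\succ i}|$ grows by one; the resulting order statistic equals the old price if $v'_{j'}$ is at least the old threshold, and is at least $v'_{j'} > v_i$ otherwise --- either way $i$'s price stays above $v_i$ and she still loses. The paper does the opposite: it neutralizes a \emph{low}-valued rival. Without forwarding, $i_3$ (value below $v_{i_1}=15$) wins at price $0$ via the $|S| < k'$ convention, burning a unit and removing her value from $i_1$'s pool; with forwarding, the follower $i_2$'s value $20$ prices $i_3$ out, so $i_1$ pays the smaller of the two remaining low values rather than the larger --- a strict price gap ($9$ versus $10$) with $i_1$ winning in \emph{both} runs, no loser-to-winner flip needed. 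Finally, the ``routine bookkeeping'' you defer is precisely where such proofs live or die: under the paper's stated conventions the printed values $(v_{i_3}, v_{i_4}) = (10, 9)$ actually yield a payment of $9$ in both runs, and the claimed $9$-versus-$10$ gap emerges only after transposing those two values. Without a fully verified instance, your proposal does not yet prove the proposition.
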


\begin{proof}
Consider $k$ units and $k+2$ buyers
 $i_{1}, \ldots, i_{k+2}$, such that
 $r_{s} = \{i_{1}, i_{3}, i_{5}, i_{6}, \ldots, i_{k+2}\}$,
 $\theta_{i_{1}} = (15, \{i_{2}\})$,
 $\theta_{i_{2}} = (20, \emptyset)$,
 $\theta_{i_{3}} = (10, \{i_{4}\})$,
$\theta_{i_{4}} = (9, \emptyset)$,
and $\theta_{i_{j}} = (30, \emptyset)$ for all $5 \leq j \leq k+2$.
The priority is given as $i_{5} \succ i_{6} \succ \cdots \succ i_{k+2}
\succ i_{3} \succ i_{1} \succ i_{4} \succ i_{2}$.
The first $k-2$ units are sold to $\{i_{j}\}_{5 \leq j \leq k+2}$,
regardless of $i_{1}$'s forwarding strategy.
 Under $i_{1}$'s sincere forwarding to $i_{2}$, $i_{1}$ wins a unit
and pays $9$.
 If $i_{1}$ does not forward the information to $i_{2}$,
 then $i_{1}$ would win a unit and pay $10$.
So not forwarding the information is dominated by a sincere forwarding
 in this case.
\end{proof}

\begin{proposition}
 For each $i$,
 reporting $(v_{i}, \emptyset)$
 is a dominant strategy in both ND-VCG and FCFS-F.
\end{proposition}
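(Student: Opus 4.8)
The plan is to treat the two mechanisms separately, showing in each case that neither coordinate of the report $(v_i, \emptyset)$ can be changed to $i$'s advantage. For ND-VCG the argument will hinge on the fact that the set of participants is exactly the direct buyers $r_s$, which is fixed by the seller and cannot be altered by any buyer $i$'s forwarding choice $r'_i$; forwarding only reaches $i$'s own followers and never enlarges $r_s$. I would first dispose of the case $i \notin r_s$: such a buyer never participates, so $f_i = t_i = 0$ under every report and $(v_i, \emptyset)$ is trivially optimal. For $i \in r_s$, the outcome of ND-VCG depends only on the reported values of the direct buyers, so $r'_i$ is irrelevant to $i$'s allocation and payment, and reporting $v_i$ is optimal by the standard strategy-proofness of VCG. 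Combining the two cases gives that $(v_i, \emptyset)$ is a dominant strategy in ND-VCG.

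For FCFS-F, units are allocated for free in ascending order of $d(\cdot)$, so payments are identically zero and the value report is irrelevant to the allocation; thus $i$'s utility equals $v_i \cdot f_i$ and depends only on whether $i$ lies among the first $k$ connected buyers in the priority order. The core step is to show that $i$'s winning status is independent of $r'_i$. Let $G$ and $G'$ be the reported graphs when $i$ forwards to all her followers and to none, respectively, so that $G'$ is obtained from $G$ by deleting $i$'s outgoing edges. Deleting edges can only weakly increase shortest-path distances, and $d(i)$ itself is unchanged since it depends on incoming paths to $i$. The crucial claim I would establish is that every buyer $j$ with $d_G(j) \le d_G(i)$ has $d_{G'}(j) = d_G(j)$, because any $s$-$j$ path through $i$ has length at least $d_G(i)+1$ and therefore never serves as a shortest path for such a $j$. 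Consequently no buyer crosses from the "after $i$" side to the "before $i$" side of the order, the set of buyers ranked strictly before $i$ is identical in $G$ and $G'$, and so $i$ wins under $r'_i = r_i$ if and only if she wins under $r'_i = \emptyset$.

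Since forwarding leaves $i$'s winning status intact, the value report is immaterial, and payments are always zero, every reportable type yields $i$ the same utility; in particular $(v_i, \emptyset)$ attains the maximum, so it is a dominant strategy in FCFS-F. I expect the main obstacle to be precisely this FCFS-F distance argument: one must verify that deleting $i$'s outgoing edges leaves the entire prefix of the priority order up to and including $i$ unchanged, which combines the monotonicity of distances under edge deletion with the observation that any buyer tied with or ahead of $i$ can never rely on $i$'s outgoing edges. By contrast, the ND-VCG part is routine once one notes that $r_s$ is exogenous to $i$'s forwarding and then invokes the strategy-proofness of VCG.
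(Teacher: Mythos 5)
Your proposal is correct and follows essentially the same route as the paper: for ND-VCG, the outcome depends only on the direct buyers' value reports (so $r'_i$ is irrelevant and VCG truthfulness handles $v'_i$), and for FCFS-F, withholding outgoing edges can only delay buyers who already arrive after $i$, leaving the priority prefix up to $i$ --- and hence $i$'s winning status --- unchanged. Your explicit shortest-path argument (any $s$--$j$ path using $i$'s outgoing edges has length at least $d(i)+1$, so buyers with $d(j) \leq d(i)$ keep their distances under edge deletion) is just a more carefully spelled-out version of the paper's one-line observation, and it correctly extends from $r'_i = \emptyset$ to arbitrary $r'_i \subseteq r_i$.
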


\begin{proof}
 In ND-VCG, only the reports from the direct buyers
 affect the outcome. Therefore,
 for each direct buyer $i \in r_{s}$,
 any valuation report $v'_{i}$,
 and any action by other buyers,
 the choice of $r'_{i} \subseteq r_{i}$ does not change the outcome at all.

 In FCFS-F,
 for each $i \in N$,
 any follower of $i$ who originally arrives after $i$ under $i$'s sincere forwarding $r_{i}$
 is still arriving after $i$ under any manipulation $r'_{i} \subset r_{i}$.
 Thus, whether $i$ wins a unit does not depend on the choice of $r'_{i}$.
\end{proof}

\section{Conclusions}
\label{conclu}

The distance-based mechanism satisfies strategy-proofness,
non-wastefulness, non-deficit, and individual rationality.
The performance is comprehensively analyzed; it dominates
the two na\"{i}ve mechanisms in terms of both social surplus and revenue.
Several other properties are also revealed.

A more detailed analysis on the complexity of maximizing the
seller's revenue is required, such as for the case with a fixed number $k$
of units.
Our future work will also include
more general revenue analysis,
e.g., revenue equivalence~\cite{heydenreich:ECTA:2009} and
revenue optimality~\cite{myerson:MathOR:1981}.
Extending the distance-based mechanism for more general domains,
such as multi-unit auctions with decreasing marginal values,
is also crucial.
Considering an obviously strategy-proof auction via social network
will also be an interesting direction~\cite{li:AER:2017}.

\section{Acknowledgments}
\label{ack}
This work is partially supported by JSPS KAKENHI Grants JP17H00761 and JP17H04695, and JST SICORP JPMJSC1607.

\clearpage

\appendix 
\section{Discussion on Revelation Principle}
\label{app:revelation}
The {\em revelation principle} is a fundamental concept
of mechanism design, which specifies the relation between direct revelation
and indirect mechanisms.
If an indirect mechanism has a dominant strategy equilibrium,
then we can find a strategy-proof direct revelation mechanism
that results in the same outcome in the equilibrium.
%
However, in a non-standard setting,
e.g., online mechanism design~\cite{parkes:AGT:2007},
the revelation principle might fail.
Thus, we need to exert caution.

First, a direct revelation mechanism should not utilize any
information that is unavailable in an indirect mechanism.
On one hand, in our setting, 
the seller does not know the existence of the buyers
who are not directly connected to her.
On the other hand, in a direct revelation mechanism,
we assume the mechanism designer knows $N$,
i.e., all the buyers, and gets reports from each buyer in $N$.
We require that in a feasible allocation, a direct revelation mechanism
can only allocate a unit to a connected buyer.
Thus, a feasible direct revelation mechanism
does not utilizes any information
that is unavailable in an indirect mechanism.

Second, the space of possible manipulations in a direct revelation mechanism
must be {\em equivalent} to that in an indirect mechanism.
In our setting, a buyer can strategically stop forwarding
the information to her followers. By assuming that
buyer $i$, whose true followers are $r_i$, can report only
$r'_i \subseteq r_i$, we preserve the equivalence.
However, just preserving the equivalence is inadequate.
An intuitive explanation why the revelation principle holds
is as follows. Assume an indirect mechanism with a
dominant strategy equilibrium. Then we can create a direct revelation mechanism,
by introducing a mediator between the indirect mechanism and a
participant. The mediator asks for the participant's true type and
plays her dominant strategy. So she does not have an incentive
for lying to the mediator.
When the possible manipulations/reportable types
are restricted, this is no longer true.
Intuitively, assume a participant misreports
her type to the mediator.
Then the mediator does the best manipulation based on the {\em reported} type.
As a result,
the space of possible manipulations in the direct revelation mechanism
can be expanded more compared to that of the indirect mechanism.


Green and Laffont (\citeyear{green:RES:1986}) proposed a condition on reportable types
called the {\em nested range condition} (NRC) and showed that
it is actually a necessary and sufficient condition for the revelation principle
to hold for the case of mechanism design with transfer functions, such as auctions.
Intuitively, NRC is a transitivity condition, which requires that
if an agent (buyer) with type $\theta_{i}$ can pretend to have type $\theta_{i}'$
and
if an agent with type $\theta_{i}'$ can pretend to have type $\theta_{i}''$,
then the agent with type $\theta_{i}$ can also pretend to have type $\theta_{i}''$.
In our model, NRC obviously holds, since the restriction on
$r_{i}$ satisfies transitivity, and there is no restriction on
reportable valuation functions. Therefore, to discuss the outcome
in a dominant strategy equilibrium in an auction via social network,
focusing on direct revelation mechanisms is without loss of generality.

\section{The Formal Definitions of Two Na\"{i}ve Mechanisms}
\label{app:def-mechanisms}

\begin{definition}[ND-VCG]
 Given $\boldsymbol{\theta}'$,
 No-Diffusion-VCG $(f,t)$ is defined as follows:
 \[
 f_{i}(\boldsymbol{\theta}') =
 \begin{cases}
  1 & \text{ if } i \in r_{s} \land v'_{i} \geq v^{*}(r_{s} \setminus \{i\}, k)\\
  0 & \text{ otherwise.}
 \end{cases}
 \]
 \[
 t_{i}(\boldsymbol{\theta}') =
 \begin{cases}
  v^{*}(r_{s} \setminus \{i\}, k) & \text{ if } f_{i}(\boldsymbol{\theta}') = 1 \\
  0 & \text{ otherwise.}
 \end{cases}
 \]
\end{definition}

\begin{definition}[FCFS-F]
 Given $\boldsymbol{\theta}'$,
 first label the connected buyers $\hat{N}$ in
 an ascending order of $d(\cdot)$,
 as we did in the proposed mechanism (described in Definition 6).
 Assume w.l.o.g.\ that the order is $1, 2, \ldots, |\hat{N}|$.
 Then, FCFS-for-Free $(f,t)$ is defined as follows:
 \[
  f_{i}(\boldsymbol{\theta}') =
 \begin{cases}
  1 & \text{ if } i \leq k\\
  0 & \text{ otherwise.}
 \end{cases}
 \]
 \[
  t_{i}(\boldsymbol{\theta}') = 0 \text{ for all } i \in N
 \]
\end{definition}

\section{Omitted Proofs for Section 3}
\label{app:section3}

\begin{proof}[Proof of Theorem 1]
 Let $(f,t)$ be the proposed mechanism.

 {\bf Feasibility:}
 Since $v^{*}(\cdot, k') = \infty$
 for any $k' \leq 0$,
 and $k'$ is initialized by $k$ and decremented whenever a unit is sold
 (see lines 4--6 of the procedure),
 at most $k$ connected buyers buy a unit,
 i.e., $\sum_{i \in N} f_{i}(\boldsymbol{\theta}') \leq k$ for any $\boldsymbol{\theta}'$.
 Furthermore, since any unconnected buyer has no chance to buy a unit,
 feasibility is guaranteed.

 {\bf Individual Rationality:}
 Each buyer $i$ faces a price $p_{i}$
 and buys a unit only when $v'_{i} \geq p_{i}$.
 Thus, under truth-telling, each buyer's price
 is not strictly larger than the true value $v_{i}$
 if she wins a unit, i.e., $v_{i} \geq p_{i}$,
 which guarantees individual rationality.

 {\bf Non-Deficit:}
 Furthermore, by the definition of $v^{*}$,
 the price for each winner is non-negative.
 Since each loser receives no compensation,
 it also satisfies non-deficit.
\end{proof}

\begin{proof}[Proof of Theorem 2]
 Let $(f,t)$ be the distance-based mechanism.
 Consider an arbitrary report profile $\boldsymbol{\theta}'$. 
 If $|\hat{N}| \leq k$, the price that each buyer $i \in \hat{N}$ faces is
 $v^*(\hat{N}_{-i} \setminus \hat{W}_{\succ i}, k - |\hat{W}_{\succ i}|)$.
 Since
 $|\hat{N}_{-i} \setminus \hat{W}_{\succ i}| <
 |\hat{N}| - |\hat{W}_{\succ i}| \leq
 k - |\hat{W}_{\succ i}|$, the price is $0$.
 Thus, 
 the number of winners is $|\hat{N}| = \min(k, |\hat{N}|)$,
 satisfying non-wastefulness.
 We then focus on the case of $k < |\hat{N}|$.
 Assume for the sake of contradiction that
 the number of units sold under $\boldsymbol{\theta}'$ is
 strictly smaller than $k$.
 More specifically,
 letting $m := \sum_{i \in \hat{N}} f_{i}(\boldsymbol{\theta}')$,
 we assume that $m < k$ holds.

 Assume w.l.o.g.\ that the top $m+1$ buyers among $\hat{N}$,
 in terms of reported values,
 are labeled as $i_{1}, i_{2}, \ldots, i_{m+1}$
 in a descending order of reported value $v'$,
 with an arbitrary tie-breaking, i.e.,
 $v'_{i_{1}} \geq v'_{i_{2}} \geq \ldots \geq v'_{i_{m+1}}$.
 Also let $M$ be the top $m$ buyers,
 i.e., $M := \{i_{1}, \ldots, i_{m}\}$.
 First, we show by induction that each member of $M$ must be a winner.
 For the base case, let us show that $i_1$ obtains a unit.
 The price of $i_1$ is given as
 $v^*(N_{-i_1} \setminus \hat{W}_{\succ i_1}, k - |\hat{W}_{\succ i_1}|)$.
 Since $k - |\hat{W}_{\succ i_1}| \geq 1$ (note that we assume the number of
 winners is fewer than $k$), this value  is less than or equal to
 $v^*(N_{-i_1}, 1) \leq v'_{i_1}$.
 Thus, $i_1$ is a winner.
 For the inductive case, let us assume
 $M'=\{i_1, \ldots, i_g\}$ are winners.
 We show that $i_{g+1}$ is also a winner.
 Let $M''$ denote $\{i \in M' \mid d(i) < d(i_{g+1})\}$.
 Also, let $g_1$ denote $|M''|$ and
 $g_2$ denote $g - g_1$.
 Since $i_{g+1}$ is the $g+1$-st buyer and $M'$ is the set of top $g$
 buyers, 
 $
 v'_{i_{g+1}} = v^*(\hat{N}\setminus W', 1+g - w')
 $
 holds
 for any subset $W' \subseteq M'$, where $w' := |W'|$.
 Furthermore, since the function $v^*$ is monotonically non-decreasing
 on the first argument, $\hat{N} \supseteq \hat{N}_{-i_{g+1}}$ implies
 $
 v^*(\hat{N}\setminus W', 1+g - w')\geq v^*(\hat{N}_{-i_{g+1}}\setminus W', 1+g - w')
 $.
 The price of $i_{g+1}$ is given as
 $v^*(\hat{N}_{-i_{g+1}} \setminus \hat{W}_{\succ i_{g+1}}, k - |\hat{W}_{\succ i_{g+1}}|)$.
 Since $k - |\hat{W}_{\succ i_{g+1}}| \geq g_2 +1$ and
 $\hat{W}_{\succ i_{g+1}}$ is a superset of $M''$,
 the price is smaller than or equal to
 $v^*(\hat{N}_{-i_{g+1}} \setminus M'', g_2 +1)$.
 Combining with the above inequality (where we choose $W'$ to $M''$,
 s.t.\ $|M''|=g_1$ and $g_2 = g - g_1$),
 the price is smaller than or equal to
 $v^*(\hat{N} \setminus M'', 1 + g - |M''|)$, which equals
 $v'_{i_{g+1}}$.
 Thus, $i_{g+1}$ must be a winner.
 Therefore, each member of $M$ is allocated a unit.
 Also, since we assume the number of winners is $m$, only the members of $M$ obtain a unit.

 Now consider the buyer $i_{m+1}$.
 Since $M$ is the set of top $m$ buyers
 and $i_{m+1}$ is the $m+1$-st buyer,
 $
 v'_{i_{m+1}} = 
 v^{*}(\hat{N} \setminus W', 1 + m - w')
 $ holds
 for any subset $W' \subseteq M$, where $w' := |W'|$.
 Furthermore, since the function $v^{*}$ is monotonically non-decreasing
 on the first argument, $\hat{N} \supseteq \hat{N}_{-i_{m+1}}$ implies
 $
 v^{*}(\hat{N} \setminus W', 1 + m - w') \geq v^{*}(\hat{N}_{-i_{m+1}} \setminus W', 1 + m - w')
 $.

 The price that the buyer faces is
 given as
 $v^{*}(\hat{N}_{-i_{m+1}} \setminus \hat{W}_{\succ i_{m+1}}, k - |\hat{W}_{\succ i_{m+1}}|)$.
 Since $m < k$ and both $m$ and $k$ are integers,
 $m+1 \leq k$ holds.
 Thus, we have
 $v^{*}(\hat{N}_{-i_{m+1}} \setminus \hat{W}_{\succ i_{m+1}}, 1 + m - |\hat{W}_{\succ i_{m+1}}|)
 \geq
 v^{*}(\hat{N}_{-i_{m+1}} \setminus \hat{W}_{\succ i_{m+1}}, k - |\hat{W}_{\succ i_{m+1}}|)$
 from the definition of $v^{*}$.
 Since $\hat{W}_{\succ i_{m+1}}$ is also a subset of $M$,
 from the above inequalities, we finally have
 $
 v'_{i_{m+1}}
 \geq
 v^{*}(\hat{N}_{-i_{m+1}} \setminus \hat{W}_{\succ i_{m+1}}, k - |\hat{W}_{\succ i_{m+1}}|)
 $,
 where RHS is the price that $i_{m+1}$ faces.
 Therefore, she also wins a unit, violating
 the assumption that only $m$ units are sold.
\end{proof}

\section{Omitted Proof for Section 4}
\label{app:section4}

\begin{proof}[Proof of Lemma 3]
 If $\ell \geq k$,
 $k$ units are allocated 
 from Theorem 2.
 Assume $\ell < k$.
 Let $M = \{i_1, \ldots, i_{\ell}\}$ denote the set of buyers whose
 evaluation values are more than or equal to $v_r$.
 Without loss of generality, we assume $v_{i_1} \geq v_{i_2} \geq
 \ldots \geq v_{i_{\ell}}$ holds,
 and show that each buyer in $M$ is allocated a unit
 by induction.
 For the base case, we show that $i_1$ obtains a unit.
 The price of $i_1$ is given as
 $v^*(N_{-i_1} \setminus W_{\succ i_1}, k - |W_{\succ i_1}|)$.
 Since $k - |W_{\succ i_1}| \geq 1$, this value
 is less than or equal to
 $v^*(N_{-i_1}, 1) \leq v_{i_1}$.
 Thus, $i_1$ is a winner.

 For the inductive case, assume
 $M'=\{i_1, \ldots, i_g\}$ are winners
 for any $g < \ell$.
 We show that $i_{g+1}$ is also a winner.
 Let $M''$ denote $\{i \in M' \mid d(i) < d(i_{g+1})\}$.
 Also, let $g_1$ denote $|M''|$ and
 $g_2$ denote $g - g_1$.
 Since $i_{g+1}$ is the $g+1$-st buyer and $M'$ is the set of top $g$
 buyers, it holds that
 \[v'_{i_{g+1}} = v^*(\hat{N} \setminus W', 1+g - w'),\]
 for any subset $W' \subseteq M'$, where $w' := |W'|$.
 Furthermore, since the function $v^*$ is monotonically non-decreasing
 on the first argument, $\hat{N} \supseteq \hat{N}_{-i_{g+1}}$ implies
 \[v^*(\hat{N}\setminus W', 1+g - w')\geq v^*(\hat{N}_{-i_{g+1}} \setminus W', 1+g - w').\]
 The price of $i_{g+1}$ is given as
 $v^*(\hat{N}_{-i_{g+1}} \setminus \hat{W}_{\succ i_{g+1}}, k - |\hat{W}_{\succ i_{g+1}}|)$.
 Since $k - |\hat{W}_{\succ i_{g+1}}| \geq g_2 +1$ and
 $\hat{W}_{\succ i_{g+1}}$ is a superset of $M''$,
 the price is smaller than or equal to
 $v^*(\hat{N}_{-i_{g+1}} \setminus M'', g_2 +1)$.
 Combining with the above inequality (where we choose $W'$ to $M''$,
 s.t.\ $|M''|=g_1$ and $g_2 = g - g_1$),
 we can see that the price is smaller than or equal to
 $v^*(\hat{N}\setminus M'', 1 + g - |M''|)$, which equals
 $v'_{i_{g+1}}$.
 Thus, $i_{g+1}$ must be a winner.
 Therefore, all the top $\ell$ buyers are allocated a unit.
\end{proof}

\section{Omitted Proof for Section 5}
\label{app:section5}

\begin{proof}[Proof of Theorem 5]
 First, {\sc Optimal Diffusion} is in NP since it is polynomial
 to compute $\sum_{i \in N} t_i(\boldsymbol{\theta'} \mid r'_s)$.
 We show that this problem is NP-complete by a reduction from {\sc Partition}.
 Given an instance of {\sc Partition}, we construct an instance of {\sc Optimal Diffusion} as follows.
 First, let us partition the set of buyers in three subsets $N = N^A \cup N^B \cup N^C$.
 \begin{itemize}
 	\item For each $i \in \boldsymbol{A}$, we create $(v(i)+1)$ buyers
 	$(a^i_j)_{j\in [0,\ldots, v(i)]}$ in $N^A$ such that:
 		\begin{itemize}
 			\item $\theta'_{a^i_0}=(\epsilon, (a^i_j)_{j\in [1, \ldots,v(i)]})$,
 			\item $\theta'_{a^i_j}=(v_1, \emptyset)$ for all $j \in [1, \ldots, v(i)]$.
 		\end{itemize}
 	\item Set $N^B = (b_j)_{j\in [1, \ldots, m+2]}$ is composed of $(m+2)$ buyers such that:
 		\begin{itemize}
 			\item $\theta'_{b_1}= (v_2, \{b_2\})$,
 			\item $\theta'_{b_j}= (v_3, \{b_{j+1}\})$ for all $j\in [2, \ldots, m+1]$,
 			\item $\theta'_{b_{m+2}}= (v_4, \emptyset)$.
 		\end{itemize}
 	\item Set $N^C = (c_j)_{j\in [1, \ldots, m+1]}$ is composed of $(m+1)$ buyers such that:
 		\begin{itemize}
 			\item $\theta'_{c_j}= (\epsilon, \{c_{j+1}\})$ for all $j\in [1, \ldots, m]$,
 			\item $\theta'_{c_{m+1}}= (v_5, \emptyset)$.
 		\end{itemize}
 \end{itemize}
 The network construction is illustrated in Figure \ref{fig:OptAuction}.

 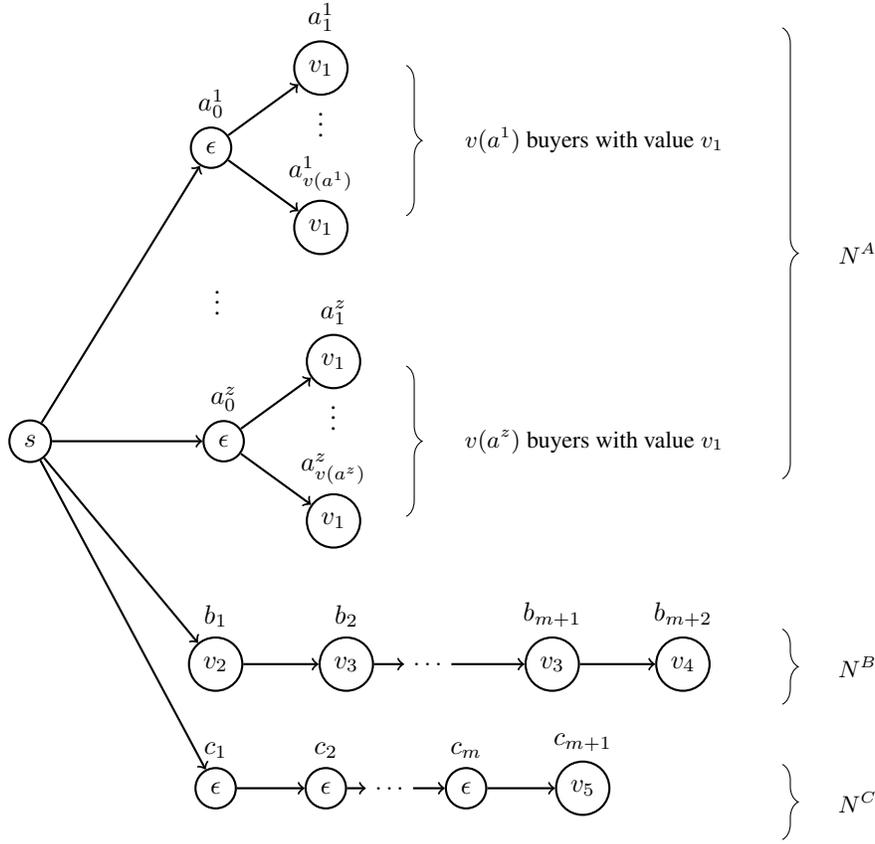
\begin{figure*}[t]
 \begin{center}
 \begin{tikzpicture}
 	\node[draw, thick, circle] (s) {$s$};
 	\node[draw, thick,  circle,  label=above:{$a^1_0$}] (a1) [above right = 3.5cm and 2cm of s] {$\epsilon$};
 	\node[draw, thick, circle,  label=above:{$a^z_0$}] (a2) [right = 2cm of s] {$\epsilon$};
 	\node[draw, thick, circle,  label=above:{$b_1$}] (b1) [below right = 2.5cm and 2cm of s] {$v_2$};
 	\node[draw, thick, circle,  label=above:{$c_1$}] (c1) [below =1cm of b1] {$\epsilon$};
 	\node[draw, thick,  circle,  label=above:{$a^1_1$}] (a11) [above right = 0.6cm and 1cm of a1] {$v_1$};
 	\node[draw, thick,  circle,  label=above:{$a^1_{v(a^1)}$}] (a12) [below right = 0.6cm and 1cm of a1] {$v_1$};
 	\node[draw, thick,  circle,  label=above:{$a^z_1$}] (a21) [above right = 0.6cm and 1cm of a2] {$v_1$};
 	\node[draw, thick,  circle,  label=above:{$a^z_{v(a^z)}$}] (a22) [below right = 0.6cm and 1cm of a2] {$v_1$};
 	\node[draw, thick,  circle,  label=above:{$b_2$}] (b2) [right = 1cm of b1] {$v_3$};
 	\node[draw, thick,  circle,  label=above:{$c_2$}] (c2) [right = 0.9cm of c1] {$\epsilon$};
 	\node[draw, thick,  circle,  label=above:{$b_{m+1}$}] (bm) [right = 2cm of b2] {$v_3$};
	\node[]  (bminter) [right = 0.4cm of b2] {\footnotesize $\ldots$};
	\node[]  (cminter) [right = 0.25cm of c2] {\footnotesize $\ldots$};
 	\node[draw, thick,  circle,  label=above:{$c_{m}$}] (cm1) [right = 1.3cm of c2] {$\epsilon$};
 	\node[draw, thick,  circle,  label=above:{$b_{m+2}$}] (bm2) [right = 1cm of bm] {$v_4$};
 	\node[draw, thick,  circle,  label=above:{$c_{m+1}$}] (cm2) [right = 0.9cm of cm1] {$v_5$};

 	\path[draw,thick]
 	(s) edge[->]  (a1)
 	(s) edge[->]  (a2)
 	(s) edge[->]  (b1)
 	(s) edge[->]  (c1)
 	(a1) edge[->]  (a11)
 	(a1) edge[->]  (a12)
 	(a2) edge[->]  (a21)
 	(a2) edge[->]  (a22)
 	(b1) edge[->]  (b2)
 	(c1) edge[->]  (c2)
	(b2) edge[->]  (bminter)
	(bminter) edge[->]  (bm)
	(c2) edge[->]  (cminter)
	(cminter) edge[->]  (cm1)
 	(cm1) edge[->]  (cm2)
 	(bm) edge[->]  (bm2);

 	\path
 	(a1)-- node[auto=false]{\vdots}  (a2)
 	(a11)-- node[auto=false, pos=0.2]{\vdots}  (a12)
 	(a21)-- node[auto=false, pos=0.2]{\vdots}  (a22);
 \draw [decorate, decoration={brace,amplitude=6pt,raise=0pt},yshift=0pt]
 (5,5) -- (5,3) node [black,midway,xshift=2.5cm,yshift=0cm] {\footnotesize $v(a^1)$ buyers with value $v_1$};
 \draw [decorate, decoration={brace,amplitude=6pt,raise=0pt},yshift=0pt]
 (10,5.5) -- (10,-0.5) node [black,midway,xshift=1cm,yshift=0cm] {\footnotesize $N^A$};
 \draw [decorate, decoration={brace,amplitude=6pt,raise=0pt},yshift=0pt]
 (5,1) -- (5,-1) node [black,midway,xshift=2.5cm,yshift=0cm] {\footnotesize $v(a^z)$ buyers with value $v_1$};
 \draw [decorate, decoration={brace,amplitude=6pt,raise=0pt},yshift=0pt]
 (10,-2.5) -- (10,-3.5) node [black,midway,xshift=1cm,yshift=0cm] {\footnotesize $N^B$};
 \draw [decorate, decoration={brace,amplitude=6pt,raise=0pt},yshift=0pt]
 (10,-4.3) -- (10,-5.3) node [black,midway,xshift=1cm,yshift=0cm] {\footnotesize $N^C$};
 \end{tikzpicture}
 \end{center}
 \caption{Reduction from {\sc Partition}: Network and Values of Buyers.}
 \label{fig:OptAuction}
 \end{figure*}
 The prices satisfy the following condition:
 \[
 	\epsilon < v_2 < v_1 < v_3 << v_4 < v_5
 \]
 Buyers are labelled with any ascending order of $d(\cdot)$ satisfying that buyer $b_{m+1}$ has priority over buyer $c_{m+1}$.
 %
 Finally, the number of unit is $k= m+2$ and the threshold value is $K = \epsilon + m\cdot v_1 + v_4$.

 Now, we show that there exists a set $\boldsymbol{A}' \subseteq
 \boldsymbol{A}$ such that $\sum_{i \in \boldsymbol{A}'} v(i)$ if and only if
 there exists a set $r'_s \subseteq r_s $ such that $\sum_{i \in N} t_i(\boldsymbol{\theta'} \mid r'_s) \geq \epsilon + m\cdot v_1 + v_4$.

 ($\Rightarrow$) Assume that there exists a subset $\boldsymbol{A}'
 \subseteq \boldsymbol{A}$ such that $\sum_{i \in A'} v(i) = m.$
 Consider the set ${r'_s = \{a^i_0 \in r_s: i \in \boldsymbol{A}'\} \cup
 \{b_1\} \cup \{c_1\}}$.
 Notice that when the seller shares the information with $r'_s$, the number of buyers with value $v_1$ involved in the auction is exactly $m$.
 Then the mechanism runs as follows:
 \begin{itemize}
 	\item Buyers $(a^i_0)_{i \in \boldsymbol{A}}$ receive price $v_3$, and
 	thus they do not buy.
 	\item For buyer $b_1$, there are $(m+1)$ values that are greater than $v_2$ (one value $v_5$ from buyer $c_{m+1}$ and $m$ values $v_1$ from buyers $(a^i_j)_{a^i_0 \in r'_s, j \in [1, \ldots, v(i)]}$).
 		Hence, buyer $b_1$ receives price $\epsilon$ and buys one unit.
 	\item Buyer $c_1$ receives price $v_1$ and does not buy.
 	\item Buyers $(a^i_j)_{a^i_0 \in r'_s, j\in [1, \ldots, v(i)]}$ receive price $v_3$ and do not buy.
 	\item For $j \in [2, \ldots, m]$:
 		\begin{itemize}
 			\item Buyer $b_{j}$ receives price $v_1$ and buys one unit.
 			\item Buyer $c_{j}$ receives price $v_3$ and does not buy.
 		\end{itemize}
 	\item For $j = m+1$, there are two units remaining, and thus buyer $b_{m+1}$ receives price $v_1$ and buys an unit.
 		Then, buyer $c_{m+1}$ receives price $v_4$ and buys an unit.
 		Finally, there is no unit left for buyer $b_{m+2}$.
 \end{itemize}
 Therefore, the revenue is $\epsilon + m\cdot v_1 + v_4$.

 ($\Leftarrow$) Assume that there exists $r'_s \subseteq r_s$ such that
 ${\sum_{i \in N} t_i(\boldsymbol{\theta'} \mid r'_s) \geq \epsilon +
 m\cdot v_1 + v_4}$.
 Since $v_3 << v_4$ holds, the threshold value can only be attained when buyer $c_{m+1}$ buys an unit at price $v_4$.
 Since there exists only one buyer with value $v_4$ (buyer $b_{m+2}$), it implies that $b_1$ and $c_1$ belong to $r'_s$, and that there is only one unit left when the mechanism reaches buyer $c_{m+1}$.
 Let us denote $\#v_1$ the number of buyers with value $v_1$ that are followers of buyers in $r'_s$.
 Note that, independently from the value taken by $\#v_1$, the buyers
 $(a^i_0)_{i \in \boldsymbol{A}}$ always receive price $v_3$ and thus they
 never buy.
 There are three cases to consider.

 First, assume that $\#v_1 >m$, then the mechanism runs as follows:
 \begin{itemize}
 	\item For buyer $b_1$, there are at least $m+2$ values that are greater than $v_2$ (one value $v_5$ from buyer $c_{m+1}$ and at least $m+1$ values $v_1$).
 		Hence, buyer $b_1$ receives price $v_1$ and does not buy.
 	\item Buyer $c_1$ also receives price $v_1$ and does not buy.
 	\item Buyers $(a^i_j)_{a^i_0 \in r'_s, j\in [1, \ldots, v(i)]}$ receive price $v_3$ and do not buy.
 	\item For $j \in [2, \ldots, m]$:
 		\begin{itemize}
 			\item Buyer $b_{j}$ receives price $v_1$ and buys one unit.
 			\item Buyer $c_{j}$ receives price $v_1$ and does not buy.
 		\end{itemize}
 	\item For $j = m+1$, there are three units remaining, and thus buyer $b_{m+1}$ receives price $v_1$ and buys.
 		Then, buyer $c_{m+1}$ also receives price $v_1$ and buys.
 		Finally, there is one unit left for buyer $b_{m+2}$ with price $v_1$ and he buys.
 \end{itemize}
 Hence the revenue is $(m+2)\cdot v_1$, which is a contradiction.

 Now, assume that $\#v_1 <m$, then the mechanism runs as follows:
 \begin{itemize}
 	\item For buyer $b_1$, there are at most $m$ values that are greater than $v_2$ (one value $v_5$ from buyer $c_{m+1}$ and at most $m-1$ values $v_1$).
 		Hence, buyer $b_1$ receives price $\epsilon$ and buys an unit.
 	\item Buyer $c_1$ receives price $v_3$ and does not buy.
 	\item Buyers $(a^i_j)_{a^i_0 \in r'_s, j\in [1, \ldots, v(i)]}$ receive price $v_3$ and do not buy.
 	\item For $j \in [2, \ldots, m]$:
 		\begin{itemize}
 			\item Buyer $b_{j}$ receives price \[
 		\begin{cases}
 			\epsilon,& \text{if } j \leq m-\#v+1,\\
 			v_1,& \text{if } j > m-\#v+1.
 		\end{cases}
 		\] and buys one unit.
 			\item Buyer $c_{j}$ receives price $v_3$ and does not buy.
 		\end{itemize}
 	\item For $j = m+1$, there are two units remaining.
 		Buyer $b_{m+1}$ and then buyer $c_{m+1}$ receives price \[
 		\begin{cases}
 			\epsilon,& \text{if } \#v=0,\\
 			v_1,& \text{if } \#v\geq 1.
 		\end{cases}
 		\] and buys one unit each.
 		Finally, there is no unit left for buyer $b_{m+2}$.
 \end{itemize}
 Hence the revenue is \[
 		\begin{cases}
 			\epsilon + (m-\#v_1)\cdot \epsilon + (\#v_1+1) \cdot v_1 & \text{if } \#v\neq 0,\\
 			(m+2) \cdot\epsilon & \text{if } \#v= 0.
 		\end{cases}
 		\]
 		Thus, it is strictly lower than $K$, which is a contradiction.

 Finally, assume that $\#v_1 =m$, then the mechanism runs as it is described in the argument for $(\Rightarrow)$, and the revenue is thus $\epsilon + m\cdot v_1 + v_4$.

 Therefore, in $r'_s$ and their followers, there exists exactly $m$ buyers with value $v_1$.
 Hence the set $\boldsymbol{A}' =\{ i \in \boldsymbol{A}: a^i_0 \in
 r'_s\}$ is such that $\sum_{i \in \boldsymbol{A}'} v(i) = m$.
\end{proof}

\end{document}